\newtheorem{thm}{Theorem}
\newtheorem{lem}{Lemma}
\newtheorem{cor}{Corollary}
\newtheorem{cla}{Claim}
\newtheorem{de}{Definition}
\begin{document}

\title{On the Hairpin Incompletion}

\author{{\bf Fumiya Okubo$^1$} and {\bf Takashi Yokomori$^2$\footnote{Corresponding author}}\\[2mm]
$^1$Graduate School of Education\\
Waseda University, 1-6-1 Nishiwaseda, Shinjuku-ku\\
Tokyo 169-8050, Japan\\
{\tt f.okubo@akane.waseda.jp}\\[2mm]
$^2$Department of Mathematics\\
 Faculty of Education and Integrated Arts and Sciences\\
 Waseda University, 1-6-1 Nishiwaseda, Shinjuku-ku\\
Tokyo 169-8050, Japan\\
{\tt yokomori@waseda.jp}}

\date{}
\maketitle

\begin{abstract}
Hairpin completion and its variant called bounded hairpin 
completion are operations on formal languages, inspired by a hairpin 
formation in molecular biology. Another variant called 
hairpin lengthening has been recently introduced and studied on the closure  
properties and algorithmic problems concerning several families of languages. 
   
In this paper, we introduce a new operation of this kind, called {\it hairpin incompletion} which 
is not only an extension of bounded hairpin 
completion, but also a restricted (bounded) variant of 
hairpin lengthening.  Further, the hairpin incompletion 
operation provides a formal language theoretic framework 
that models a bio-molecular technique nowadays known as  
Whiplash PCR.  We study the closure properties 
of language families under both the operation and its iterated version. 

We show that  a family of languages closed under 
intersection with regular sets, concatenation with regular sets, and 
finite union is closed under one-sided iterated hairpin incompletion, and  
that a family of languages containing all linear languages and 
closed under circular permutation, left derivative and substitution 
is also closed under iterated hairpin incompletion.

\end{abstract}

\section{Introduction}

In these years there has been introduced and intensively investigated  an operation called {\it hairpin completion} in formal language theory, inspired by intra molecular phenomena in molecular biology. A hairpin structure is well-known as one of the most popular secondary structures for a single stranded DNA (or RNA) molecule to form, with the help of so-called {\it Watson-Crick complementarity} and {\it annealing}, under a certain 
biochemical condition in a solution.

This paper continues research directed by a series of works  started in \cite{tc06}  where the hairpin completion operation was introduced, followed by several other related papers (\cite{dam09,cie2007,tcs09}), where both the hairpin completion and its inverse operation (the hairpin reduction) were  investigated. 

Inspired by threefold motivations, we will introduce the notion of {\it hairpin incompletion} in this paper.  Firstly, the hairpin incompletion is a natural extension of the notion of {\it bounded hairpin completion} introduced  and studied in \cite{bounded09} which is a restricted variant of the hairpin completion with the property that the length of the 
prefix (suffix) prolongation is constantly bounded.  Thus, the bounded 
 hairpin completion involves the lengthening of prefix (suffix) with a 
 constant length of the strand at the end, which implies that the 
 resulting strand always bears a specific property that its prefix 
and suffix {\it always form complementary sub-strands} of a certain  
 constant length. In contrast, our notion of hairpin incompletion can produce a resulting strand with more complexity, due to the nature of 
its prolongation, which will be formally explained later. 

Secondly, the hairpin incompletion is also regarded as a restricted 
variant of the notion of {\it hairpin lengthening} recently introduced 
in \cite{hl10} which is an extension of the (original) notion of the hairpin completion. More specifically, the hairpin lengthening  
concerns the prolongation of a strand that allows to stop itself at 
any position in the process of completing a hairpin structure.  From the practical and molecular implementation point of view, here we are interested in the case where the prolongation in the hairpin 
lengthening is bounded by a constant, which leads to our notion of 
the hairpin incompletion. In this respect,  one may take the hairpin incompletion as the {\it bounded} variant of the hairpin lengthening.   

Thirdly, the hairpin incompletion  can provide a purely formal framework that exactly models a bio-molecular technique called 
{\it Whiplash PCR} that has nowadays been recognized as a promising experimental technique and has been proposed in an ingenious paper  \cite{mH00} by Hagiya et al.  They developed an experimental technique  called polymerization stop and theoretically showed in terms of thermal cycling how  DNA molecules can solve the learning problem of $\mu$-formulas (i.e.,  Boolean formulas with each variable appearing only once) 
from given data.  Suppose that a DNA sequence is designed as given 
in (a) of Figure 1, where  a sequence of transition (program) 
is delimited by a special sequence (called {\it stopper sequence}) 
and $\alpha$ and its reversal complementarity $\bar{\alpha}^R$ may hybridize, leading to a hairpin structure (b). Then, the head $\bar{\alpha}^R$ (current state)  is extended by polymerization 
(with a primer $\bar{\alpha}^R$ and a template $\gamma$) 
up to $\bar{\gamma}^R$, where the stopper sequence is specifically 
designed to act as the stopper.  In this way, this cycle can execute one process of state transition and be repeatedly performed\footnote{
Adleman has named this experimental technique whiplash PCR}. 
Following the work of \cite{mH00},  Sakamoto et al.   has shown  
how some NP-complete problems can be solved with Whiplash PCR (or Whiplash machines)  (\cite{Sakamoto99BS}). Recently, Komiya et al. has demonstrated 
the applicability of Whiplash PCR to the experimental validation 
of signal dependent operation (\cite{kK08a}).

\begin{figure}[t]
\centerline{
\includegraphics[scale=0.3]{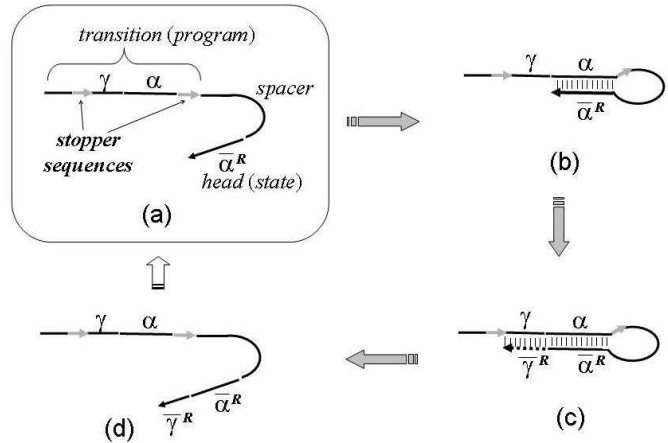}}
\caption{(a)The structural design of Whiplash PCR molecule ;  
(b) hairpin formation with stem part $\alpha$ ;
(c) polymerization extension of $\gamma$ ;
(d) simulation of one state transition.}
\label{hi2}
\end{figure}

The paper is organized as follows.  After providing the definitions of 
the basic concepts used in the paper, we define the central notion 
of {\it hairpin incompletion} (as an extension of the bounded hairpin 
completion and also as a bounded variant of the hairpin lengthening) in Section 2. We first show in Section 3 that any family of languages with a certain closure properties is closed under the hairpin incompletion. 
 We then consider the case of applying the {\it iterated}  hairpin
 incompletion operations, and show that every AFL is closed under the
 iterated one-sided hairpin incompletion. This result is further extended
 to the general case of the iterated hairpin incompletion, 
and it is shown that any family of languages including all linear languages 
and with a certain closure properties is also closed under the iterated 
hairpin incompletion, and as a corollary that the family of context-free 
languages is closed under the iterated hairpin incompletion, followed 
by a brief discussion with concluding remarks in Section 4.

\section{Preliminaries}

\subsection{Basic definitions}

This paper assumes that the reader is familiar with the basic notions of formal language 
theory \cite{handbook}. In particular, for the notions of abstract family of languages,  
we refer to \cite{salomaa73}. 

For an alphabet $V$, $V^*$ is the set of all finite-length strings of symbols from $V$, 
and $\lambda$ is the empty string. while $V^+$ denotes $V^*-\{\lambda\}$.  
For $w \in V^*$, $|w|$ is the length of  $w$. For $k \ge 0$,  we define $V^{\ge k} = 
\{ w \in V^* \, | \, |w| \ge k \}$.  Note that for a set $S$, $|S|$ denotes the cardinality of $S$.  

For $k \ge 0$, let $pref_k (w)$ and $suf_k (w)$ be the prefix and the suffix of a word $w$
 of length $k$, respectively. For $k \ge 0$, we define  $Pref_{\le k} (w) = \{ pref_i (w) \, | \, 0 \le i \le k \}$ 
and $Suf_{\le k} (w) = \{ suf_i (w) \, | \, 0 \le i \le k \}$.  For $k \ge 1$, let $Inf_k (w)$ be 
the set of infixes of $w$ of length $k$. If $|w| \le k-1$, then 
$pref_k (w)$, $suf_k (w)$ and $Inf_k(w)$ are all undefined.  (Note that for $w\in V^+$, $pref_k (w)$ and $suf_k (w)$ are elements 
in $Inf_k(w)$.)  By $w L$ ($L w$) we simply denote $\{w\} L$ ($L \{w\}$), i.e., the concatenation of $w$ with a 
language $L$.  The left derivative of a language $L$ with a word $w$ is defined by 
$w \backslash L = \{ x \in V^* \, | \, wx \in L \}$. 
For a word $w=a_1a_2\cdots a_n \in V^*$, $w^R$ is the  palindrome of $w$, that is, 
$(a_1 a_2 \cdots a_n)^R = a_n \cdots a_2 a_1$.

A morphism $h: V^* \rightarrow U^*$ such that $h(a) \in U$ for all $a \in V$ is called a 
\textit{coding}, and it is called a \textit{weak coding} if $h(a) \in U \cup \{ \lambda \}$ for all $a \in V$. 

An {\it involution} over  $V$ is a bijection $\sigma$ : $V \rightarrow V$ 
such that $\sigma=\sigma^{-1}$.  In particular, an involution $\sigma$ over $V$ 
such that $\sigma(a)\not=a$ for all $a\in V$ is called {\it Watson-Crick involution} 
(in molecular computing theory) in a metaphorical sense of DNA complementarity.

In this paper, we fix an involution $\overline{\cdot}$ over $V$ such that $\overline{\overline{a}} = a$
 for $a \in V$ and extend it to $V^*$ in the usual way. 
Note that for all  $x, y \in V^*$, it holds that $(\overline{x})^R = \overline{x^R}$.

\subsection{Hairpin incompletion--A bounded variant of hairpin lengthening}

For the original definitions of the (unbounded) $k$-hairpin completion, the reader 
is referred to precedent papers (for example, \cite{castmit01, tc06, tcs09}).  
A variant of the notion called bounded $k$-hairpin completion and its 
modified operation were introduced and investigated in \cite{bounded09} 
and \cite{kopecki}, respectively,  while a recent paper \cite{hl10} 
introduces and studies an extended version of the hairpin completion, 
called hairpin lengthening. 

In this paper, we are interested in a new variant of both the bounded $k$-hairpin completion and the hairpin lengthening which will be introduced as follows. 

Let $m, k \ge 1$.  For any $w \in V^*$, we define the {\it $m$-bounded $k$-hairpin incompletion} 
of $w$, denoted by $HI_{m,k}(w)$,  as follows:
\begin{align*}
rHI_{m,k} (w) &= \{  w \overline{\gamma}^R \, | \, w = \delta \gamma \alpha \beta \overline{\alpha}^R , \, | \alpha | = k , \, | \gamma | \le m, \, \alpha, \beta, \gamma, \delta \in V^* \}, \\
lHI_{m,k} (w) &= \{  \overline{\gamma}^R w \, | \, w =  \alpha \beta \overline{\alpha}^R \gamma \delta , \, | \alpha | = k , \, | \gamma | \le m, \, \alpha, \beta, \gamma, \delta \in V^* \}, \\
HI_{m,k} (w) &=  rHI_{m,k} (w) \cup lHI_{m,k} (w). 
\end{align*}
where $rHI_{m,k}$  (or $lHI_{m,k}$) is called {\it $m$-bounded  right (or left)  
$k$-hairpin incompletion}. Moreover, $m$-bounded  right (or left)  $k$-hairpin 
incompletion is also called  $m$-bounded  {\it one-sided} $k$-hairpin incompletion.
(See Figure \ref{hi1}, for pictorial  illustration of the operations  $rHI_{m,k}$  and $lHI_{m,k}$.) Thus, from a mathematical viewpoint, we consider the hairpin incompletion operations whose prolongations take place at both ends in a hypothetical (and ideal) 
 molecular biological setting.

{\bf Note.}  For $w \in V^*$ not satisfying the condition to apply the $m$-bounded 
$k$-hairpin incompletion,  here we assume $r\,HI_{m,k}(w)=l\,HI_{m,k}(w)=\{ w\}$.  \\

The iterated version of the $m$-bounded right $k$-hairpin incompletion is defined 
in a usual manner : 
\[
\left\{
\begin{array}{rcl}
r HI^{0}_{m,k} (w) &=& \{ w \}, \\
r HI^{n+1}_{m,k} (w) &=& r HI_{m,k} ( r HI^{n}_{m,k} (w) ) \text{ for } n \ge 0,\\
r HI^{*}_{m,k} (w) &= &\bigcup_{n \ge 0} r HI^{n}_{m,k} (w).
\end{array}
\right.
\]
The "left" counterpart of the iterated version of this operation is defined 
in an obvious and similar manner and is denoted by \ $ l HI^{*}_{m,k} (w)$.  

Further, the iterated version of the  $m$-bounded  $k$-hairpin incompletion 
operation is defined in a similar manner as follows:
\[
\left\{
\begin{array}{rcl}
 HI^{0}_{m,k} (w) &=& \{ w \}, \\
HI^{n+1}_{m,k} (w) &=&  HI_{m,k} ( HI^{n}_{m,k} (w) ) \text{ for } n \ge 0,\\
HI^{*}_{m,k} (w) &= &\bigcup_{n \ge 0} HI^{n}_{m,k} (w).
\end{array}
\right.
\]
Finally, the  iterated version of the $m$-bounded  (right or left)  $k$-hairpin incompletion 
operation is naturally extended to languages  as follows : 
\[
\left\{
\begin{array}{rl}
r HI^{*}_{m,k} (L) &= \bigcup_{w \in L} r HI^{*}_{m,k} (w),\\
l HI^{*}_{m,k} (L) &= \bigcup_{w \in L} l HI^{*}_{m,k} (w), \\[3mm]
\mbox{and \  }HI^{*}_{m,k} (L) &= \bigcup_{w \in L} HI^{*}_{m,k} (w).\\
\end{array}
\right.
\]

\begin{figure}[t]
\centerline{
\includegraphics[scale=0.3]{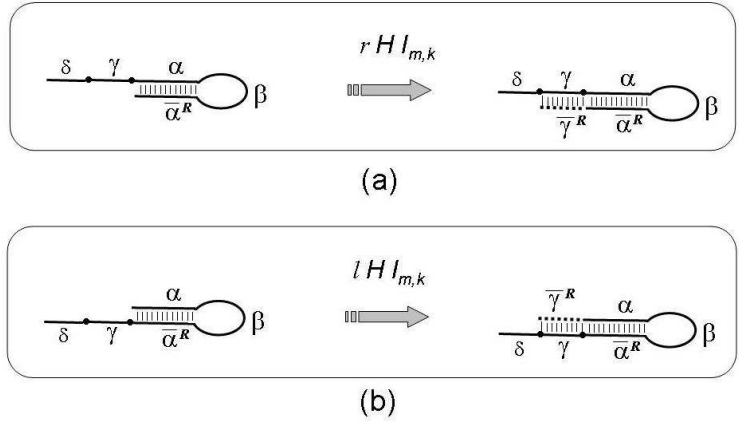}}
\caption{(a) $m$-bounded right $k$-hairpin incompletion operation ;   (b) 
$m$-bounded left $k$-hairpin incompletion operation,  where $|\alpha|=k$ 
and $|\gamma|\leq m$.
}
\label{hi1}
\end{figure}

Note that the bounded hairpin {\it in}completion in this paper is an extension of bounded hairpin completion in the sense that $HI_{m,k}(w)$ is exactly 
the same as $mHC_k(w)$ in \cite{bounded09}  when the prefix (suffix)  $\delta$ of $w$ is empty.  Further, the hairpin lengthening $HL_{k}(w)$ in \cite{hl10} 
is corresponding to the union of all $HI_{m,k}(w)$, where 
$m$ is arbitrary,  in this paper.

\section{Main Results}

\subsection{Non-iterated bounded hairpin incompletion}

As is expected from the definitions, non-iterated bounded hairpin incompletion operation 
behaves as  the bounded hairpin completion operation does.

\begin{thm}
Let $\mathcal{L}$ be a class of languages and $m, k \ge 1$. If $\mathcal{L}$ is closed 
under $gsm$-mappings, $\mathcal{L}$ is also closed under $m$-bounded $k$-hairpin incompletion.
\end{thm}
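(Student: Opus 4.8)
The plan is to realise the operation $HI_{m,k}$ as the image of a single nondeterministic $gsm$ mapping $g$, i.e. to build $g$ so that $g(w)=HI_{m,k}(w)$ for every $w\in V^{*}$. Since a $gsm$ mapping distributes over unions of inputs, $g(L)=\bigcup_{w\in L} g(w)=HI_{m,k}(L)$, and the assumed closure of $\mathcal{L}$ under $gsm$-mappings then gives $HI_{m,k}(L)\in\mathcal{L}$ at once. The reason a finite-state device suffices is that \emph{both} length bounds $|\alpha|=k$ and $|\gamma|\le m$ are constants: the machine can hold the current $\alpha$ and $\gamma$ in its finite control, and the adjoined word $\overline{\gamma}^{R}$ ranges over the finite set $\{\,\overline{x}^{R} : |x|\le m\,\}$.

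For the right operation $rHI_{m,k}$ I would have $g$ read $w$ from left to right while guessing the factorisation $w=\delta\gamma\alpha\beta\overline{\alpha}^{R}$: it copies $\delta$ unchanged; reads, stores and copies $\gamma$ (at most $m$ symbols); reads, stores and copies $\alpha$ (exactly $k$ symbols); copies $\beta$; and on a guessed length-$k$ block it verifies symbol by symbol, using the stored $\alpha$ and the involution $\overline{\cdot}$, that this block equals $\overline{\alpha}^{R}$, copying it and appending the stored word $\overline{\gamma}^{R}$ to the output of its last symbol before moving to a final state. A run is accepting only when all of $w$ has been consumed, which forces the guessed block to be the genuine last $k$ symbols of $w$; hence the accepting runs correspond exactly to the legal factorisations and the output is precisely $w\overline{\gamma}^{R}$.

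The delicate case, and the main obstacle, is the left operation $lHI_{m,k}$, where $\overline{\gamma}^{R}$ must be \emph{prepended} although the factor $\gamma$ of $w=\alpha\beta\overline{\alpha}^{R}\gamma\delta$ is reached only after the unbounded factor $\beta$; a left-to-right machine cannot buffer $\beta$ and so cannot defer its output until $\gamma$ is seen. Here the bound $|\gamma|\le m$ is essential: on reading the first symbol of the input (which exists because $k\ge 1$ forces $w\ne\lambda$), $g$ \emph{guesses} $\gamma$ from the finite set of words of length at most $m$, immediately emits the corresponding $\overline{\gamma}^{R}$, and records the guess in its state. It then copies and stores $\alpha$, copies $\beta$, verifies and copies $\overline{\alpha}^{R}$, and on the block following $\overline{\alpha}^{R}$ checks that the next $|\gamma|$ symbols agree with the guessed $\gamma$ before copying the trailing $\delta$. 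This verification lets only correct guesses reach a final state, so the emitted string is exactly $\overline{\gamma}^{R}w$.

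Finally I would combine the two machines, together with a trivial branch that copies $w$ verbatim, into one nondeterministic $gsm$ by an initial nondeterministic choice of mode; the copying branch supplies the value $w$ demanded by the convention $rHI_{m,k}(w)=lHI_{m,k}(w)=\{w\}$ when no legal factorisation exists, and is harmless otherwise since $w$ already belongs to $rHI_{m,k}(w)$ through the choice $\gamma=\lambda$. The image of $w$ under the combined machine is then $rHI_{m,k}(w)\cup lHI_{m,k}(w)=HI_{m,k}(w)$, which finishes the proof. In writing out the details, the only step needing genuine care is the left case: one must check both soundness and completeness of the guess-and-verify mechanism, the rest being routine bookkeeping over the finitely many stored factors.
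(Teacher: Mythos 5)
Your proposal is correct and takes essentially the same approach as the paper, which also proves the theorem by constructing a single nondeterministic $gsm$ that guesses the factorisation $\delta\gamma\alpha\beta\overline{\alpha}^R$ (resp. $\alpha\beta\overline{\alpha}^R\gamma\delta$) and appends (resp. prepends, via an up-front guess of $\gamma$) the word $\overline{\gamma}^R$. The paper leaves the construction as a one-line sketch; your write-up simply supplies the details, including the guess-and-verify handling of the left case and the copying branch for the degenerate convention.
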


\begin{proof}
For any $m,k \ge 1$,  consider a generalized sequential machine ({\it gsm}) $g_{m,k}$ 
which adds a suffix (or prefix) $\overline{\gamma}^R$ of length at most $m$ to the 
input word $w$  if  $w$  is of the form $\delta \gamma \alpha \beta \overline{\alpha}^R$ 
(or $\alpha \beta \overline{\alpha}^R \gamma \delta$) with $|\alpha| = k$, $|\gamma| \le m$. 
It is easily shown that this $gsm$ simulates $m$-bounded $k$-hairpin incompletion $HI_{m,k}(w)$.
\end{proof}

Since every trio\footnote{A non-empty family of languages closed under 
$\lambda$-free morphisms, inverse morphisms and intersection with 
regular languages.} is closed under {\it gsm} mapping (\cite{salomaa73}),  the following is 
straightforwardly obtained.
\begin{cor}
Every $trio$ is closed under  $m$-bounded $k$-hairpin incompletion for any $m,k \ge 1$.
\end{cor}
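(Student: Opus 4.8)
The plan is to derive the statement directly from Theorem~1, whose only hypothesis is closure under $gsm$-mappings. Thus the whole task reduces to verifying that an arbitrary trio $\mathcal{L}$ is closed under the particular mapping needed, namely the $gsm$ $g_{m,k}$ constructed in the proof of Theorem~1. First I would recall the classical characterization (see \cite{salomaa73}) that any trio---being by definition closed under $\lambda$-free morphisms, inverse morphisms, and intersection with regular sets---is automatically closed under every $\lambda$-free $gsm$-mapping. Granting this, it suffices to check that $g_{m,k}$ is $\lambda$-free.

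Second, I would argue that $g_{m,k}$ is indeed $\lambda$-free. On input $w$ the machine reproduces $w$ symbol by symbol while nondeterministically locating the decomposition $w=\delta\gamma\alpha\beta\overline{\alpha}^R$ (or $w=\alpha\beta\overline{\alpha}^R\gamma\delta$), keeping the bounded factor $\gamma$ (with $|\gamma|\le m$) and the factor $\alpha$ (with $|\alpha|=k$) in its finite control, and finally emits the appended block $\overline{\gamma}^R$. Since each input symbol is copied to the output, no transition produces the empty output on a nonempty input; hence $g_{m,k}$ is a $\lambda$-free $gsm$. Combining this with the closure of trios under $\lambda$-free $gsm$-mappings and with Theorem~1 yields the corollary.

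The only point requiring care---and the main obstacle---is precisely this $\lambda$-freeness. A trio is guaranteed to be closed only under $\lambda$-free $gsm$-mappings (closure under arbitrary $gsm$-mappings would require a full trio); so the argument would break if $g_{m,k}$ could erase symbols. Fortunately it cannot, because the hairpin incompletion only \emph{appends} a bounded block and never deletes any part of $w$, so the copying $gsm$ realizing it is length-nondecreasing and in particular $\lambda$-free. This is exactly what lets the weaker trio hypothesis suffice.
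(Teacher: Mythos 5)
Your proposal is correct and follows essentially the same route as the paper, which likewise deduces the corollary from Theorem~1 together with the closure of trios under $gsm$-mappings as cited from \cite{salomaa73}. Your extra care in checking that $g_{m,k}$ is $\lambda$-free (since a non-full trio is only guaranteed closure under $\lambda$-free $gsm$-mappings) is a worthwhile refinement of a point the paper leaves implicit, but it does not change the argument.
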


This result extends the corresponding one (Proposition 1) in \cite{bounded09}, 
while it is contrast to the results (Propositions 1 and 2) in \cite{hl10}.  

\subsection{Iterated bounded one-sided hairpin incompletion}

In this section, we consider the closure property of iterated bounded one-sided hairpin incompletion. Especially,  we show that every AFL is closed under this operation. 
To this aim, we start  by preparing some notions required in the proof of the 
main result.  A key idea of the proof is to  construct a certain equivalence relation which is right invariant and of finite index.

First, we consider the iterated $m$-bounded right $k$-hairpin incompletion operation : 
$rHI_{m,k}^*$. 

\begin{de}
Given $m, k \ge 1$ and a word $w \in V^{\ge 2k}$,  we define : 
\begin{align*}
C_{m,k}(w) = \{ (xy, z) \, | \, &xy \in \bigcup_{0 \le i \le m} Inf_{i+k}(w), \, |y|=k, \,  \\
                                                   &w=w_1 xy w_2, \, z \in Suf_{\le k}(w_2) \cap Pref_{\le k}( {\overline{y}}^R ) \}, \\
D_{m,k}(w) = ( C_{m,k}(w)&, \bigcup_{0 \le i \le m} \{ suf_{i+k-1}(w) \} ).
\end{align*}
\label{def-dmk}
\end{de}
We also define a binary relation $\equiv_{D_{m,k}}$ as follows : For $w_1, w_2 \in V^{\ge 2k}$, 
\[
 w_1 \equiv_{D_{m,k}} w_2 \ \mbox{  iff  } \  D_{m,k}(w_1) = D_{m,k}(w_2) . 
\]

Intuitively, a pair $(xy, z)$ in $C_{m,k}(w)$ implies that  it is a candidate of $(\gamma \alpha, 
\overline{\alpha}^R)$ where $\alpha$ and $\gamma$ satisfy the conditions to apply 
$m$-bounded right $k$-hairpin incompletion to $w$, producing a word in $rHI^i_{m,k} (w)$.  

From the definition, it holds that $(\gamma \alpha, \overline{\alpha}^R)$ is in $C_{m,k}(w)$ 
with $|\alpha| = k$ 
if and only if $w{\overline{\gamma}}^R$ 
is in $rHI_{m,k}(w)$.

The binary relation $\equiv_{D_{m,k}}$ is clearly an equivalence relation and of finite index, 
that is, the number of equivalence classes $| V^{\ge 2k} / \equiv_{D_{m,k}}|$ is finite. 
Moreover, the following claim holds.

\begin{cla}
The equivalence relation $\equiv_{D_{m,k}}$ is right invariant, that is,  for $w_1, w_2 \in 
V^{\ge 2k}$,  $w_1 \equiv_{D_{m,k}} w_2$ implies that for any $r \in V^*$, $w_1 r \equiv_{D_{m,k}} w_2 r$.
\label{cla-ri}
\end{cla}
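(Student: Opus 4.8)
The plan is to prove something slightly stronger and more structural: that the value $D_{m,k}(wa)$ is completely determined by the pair $(D_{m,k}(w), a)$ for every $w \in V^{\ge 2k}$ and every letter $a \in V$. Once I exhibit a function $F_a$ with $D_{m,k}(wa) = F_a(D_{m,k}(w))$, right invariance for a single letter is immediate, since $D_{m,k}(w_1) = D_{m,k}(w_2)$ forces $D_{m,k}(w_1 a) = F_a(D_{m,k}(w_1)) = F_a(D_{m,k}(w_2)) = D_{m,k}(w_2 a)$. The general statement $w_1 r \equiv_{D_{m,k}} w_2 r$ then follows by induction on $|r|$, writing $r = a_1 \cdots a_t$ and applying $F_{a_1}, \ldots, F_{a_t}$ in turn; here $|w_i| \ge 2k$ guarantees that every intermediate word stays in $V^{\ge 2k}$, so $D_{m,k}$ is defined throughout.

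For the second component of $D_{m,k}(wa)$ I would use the identity $suf_{i+k-1}(wa) = suf_{i+k-2}(w)\,a$, so that each recorded suffix of $wa$ is obtained by appending $a$ to a suffix of $w$ of length at most $k+m-1$. Since $suf_{k-1}(w), \ldots, suf_{k+m-1}(w)$ form a chain under the suffix relation, the longest one recorded in the second component of $D_{m,k}(w)$ determines all the shorter suffixes of $w$ up to length $k+m-1$; hence the new second component depends only on the old one and on $a$. A short length bookkeeping argument, using $|w| \ge 2k$, shows that equality of the two recorded suffix-sets already forces $suf_j(w_1) = suf_j(w_2)$ for every $j \le k+m-1$, which disposes of the degenerate case in which $w_1$ and $w_2$ have different lengths.

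For the first component I would partition the occurrences of a valid infix $xy$ (with $|y| = k$ and $k \le |xy| \le k+m$) inside $wa$ into those lying entirely within $w$ and the single new family of occurrences ending at the appended letter $a$. Occurrences of the latter kind force the trailing factor to be empty, so they contribute exactly the pairs $(suf_\ell(wa), \lambda)$ with $k \le \ell \le k+m$, and these are recoverable from the second component of $D_{m,k}(w)$ together with $a$. For an occurrence of the former kind with trailing factor $v$ in $w$, the trailing factor in $wa$ is $va$, and the elementary identity $Suf_{\le k}(va) = \{\lambda\} \cup \{\, sa \mid s \in Suf_{\le k-1}(v)\,\}$ shows that every admissible witness $z$ is either $\lambda$ or of the form $sa$. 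Intersecting with $Pref_{\le k}(\overline{y}^R)$ and noting that $sa \in Pref_{\le k}(\overline{y}^R)$ implies both $s \in Pref_{\le k}(\overline{y}^R)$ and $s \in Suf_{\le k-1}(v) \subseteq Suf_{\le k}(v)$, I obtain the exact correspondence: the within-$w$ occurrences contribute precisely the pairs $\{(xy,\lambda) : (xy,\lambda) \in C_{m,k}(w)\}$ together with $\{(xy, sa) \mid (xy,s) \in C_{m,k}(w),\ sa \in Pref_{\le k}(\overline{y}^R)\}$. Both descriptions refer only to $C_{m,k}(w)$ and $a$, so $C_{m,k}(wa)$, and hence $D_{m,k}(wa)$, is a function of $D_{m,k}(w)$ and $a$, completing the argument.

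The step I expect to be the main obstacle is this last correspondence for within-$w$ occurrences: one must verify, in both directions and at the level of the occurrence-free sets $C_{m,k}(\cdot)$, that appending $a$ to a witness $z$ (or taking $z = \lambda$) maps the pairs of $C_{m,k}(w)$ onto exactly the relevant pairs of $C_{m,k}(wa)$. This hinges on the compatibility of the suffix truncation $Suf_{\le k-1} \subseteq Suf_{\le k}$ with the prefix condition imposed by $\overline{y}^R$, and requires care because a single infix value $xy$ may occur at several positions with distinct trailing factors. By contrast, the second-component computation and the boundary occurrences ending at $a$ are comparatively routine.
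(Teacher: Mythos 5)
Your proposal is correct and follows essentially the same route as the paper: both arguments reduce the claim to showing that $D_{m,k}(wa)$ is a function of $D_{m,k}(w)$ and the appended letter $a$ (the paper phrases this as "$D_{m,k}(w_1ra)$ is constructed from \emph{only} $D_{m,k}(w_1r)$"), using the identity $suf_{i+k-1}(wa)=suf_{i+k-2}(w)\cdot a$ for the second component and the same three-part decomposition of $C_{m,k}(wa)$ — old pairs with witness $\lambda$, new suffix occurrences ending at $a$ paired with $\lambda$, and old pairs $(xy,z)$ promoted to $(xy,za)$ when $za\in Pref_{\le k}(\overline{y}^R)$ — before concluding by induction on $|r|$. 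The correspondence you flag as the main obstacle is exactly the step the paper dispatches with its "Note that..." remark, and your verification of it (prefix-closedness of $Pref_{\le k}(\overline{y}^R)$ together with $Suf_{\le k}(va)=\{\lambda\}\cup\{sa\mid s\in Suf_{\le k-1}(v)\}$) is sound.
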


\begin{proof}
We prove it by induction on the length of $r$. If $|r| = 0$,  then the claim trivially holds. 
Assume that  $w_1 \equiv_{D_{m,k}} w_2$  implies that $w_1 r \equiv_{D_{m,k}} w_2 r$  
with  $|r| \geq 1$.    Then, it suffices to show that for any $a \in V$,  $D_{m,k}(w_1 ra) = D_{m,k}(w_2 ra)$. 

We observe  that $D_{m,k}(w_1 ra)$  is constructed from {\it only}  $D_{m,k}(w_1 r)$  as follows:

\begin{align*}
\bigcup_{0 \le i \le m} \{ suf_{i+k-1}&(w_1 ra) \} \\
=& \{  suf_{i+k-2}(w_1 r) \cdot a \, | \,  0 \le i \le m, \, i+k \ge 2, \, |w_1 r| \ge i+k-2 \} \\
  &(\cup \{ \lambda \} \text{ if } k = 1),
\end{align*}
\begin{align*} 
C_{m,k}(w_1 ra) = &\{ (x, \lambda) \, | \, (x, \lambda) \in C_{m,k}(w_1 r) \}  \\
                &\cup \{ (suf_{i+k-1}(w_1 r) \cdot a, \lambda) \, | \, 0 \le i \le m, \, |w_1 r| \ge  i+k-1 \} \\
                 &\cup \{ (xy, za) \, | \, (xy, z) \in C_{m,k}(w_1 r), |y|=k, za \in Pref_{\le k}( {\overline{y}}^R ) \}.
\end{align*}
Note that if $(xy, z) \in C_{m,k}(w_1 r)$, then  $w_1r=w'_1 xy w''_1 z$ for some 
$w', w" \in V^*$,  so that $w_1 r a$ can be rewritten by $w'_1 xy w''_1 za$. Therefore,    
$\{ (xy, za) \, | \, (xy, z) \in C_{m,k}(w_1 r), |y|=k, za \in Pref_{\le k}( {\overline{y}}^R ) \}$ is 
contained in $C_{m,k}(w_1 ra)$.

From the induction hypothesis,  since $D_{m,k}(w_1 r) = D_{m,k}(w_2 r)$, we can construct $D_{m,k}(w_2 ra)$ from \textit{only} $D_{m,k}(w_1 r)$ in the same way. 
Thus, it holds that $D_{m,k}(w_1 ra) = D_{m,k}(w_2 ra)$.
\end{proof}

We first show that the language obtained by applying iterated bounded right hairpin 
incompletion to a singleton is regular. 

\subsubsection*{[Regular grammar $G_w$]}

Let's consider the equivalence classes :
\[
V^{\ge 2k} / \equiv_{D_{m,k}} = \{ [w_1], [w_2], \dots, [w_t] \, | \, w_i \in V^{\ge 2k}, 1 \le i \le t \},
\]
where $w_i$ is the representative of $[w_i]$. For $w \in V^{\ge 2k}$, the regular grammar 
$G_w = (N, V, P, S)$ is constructed as follows : 
\begin{align*}
N = &\{ S \} \cup \{ D_i \,| \, 1 \le i \le t \}, \\
P = &\{ S \rightarrow wD_i \, | \, w \in V^{\ge 2k}, w \equiv_{D_{m,k}} w_i \} \\
&\cup \{ D_i \rightarrow rD_j \, | \, (\gamma \alpha, \overline{\alpha}^R) \in C_{m,k}(w_i), |\alpha|=k,  \\
       & \ \ \hspace*{22mm} r = \overline{\gamma}^R , \ w_i r \equiv_{D_{m,k}} w_j, \ 
 1 \le i, j \le t \ \} \\
&\cup \{ D_i \rightarrow \lambda \, | \, 1 \le i \le t \}.
\end{align*}
We need the following two claims.
\begin{cla}
Let $w$ be in $V^{\ge 2k}$, and $D_i, D_j \in N$. Then, for $n \ge 0$,  if 
a derivation of $G_w$ is of the form $wD_i \Rightarrow^n wrD_j$  
for some  $r \in V^*$, then it holds $wr \equiv_{D_{m,k}} w_j$.
\label{cla-sf}
\end{cla}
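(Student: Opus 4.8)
The plan is to prove the statement by induction on the number $n$ of derivation steps, with the right invariance established in Claim~\ref{cla-ri} serving as the main engine.

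First I would settle the base case $n = 0$. Here $wD_i \Rightarrow^0 wrD_j$ forces $r = \lambda$ and $D_j = D_i$, so the goal reduces to $w \equiv_{D_{m,k}} w_i$. This is not vacuous: since $wD_i$ occurs as a sentential form of $G_w$, it can only have been introduced by an $S$-production $S \rightarrow wD_i$, and such a production lies in $P$ precisely when $w \equiv_{D_{m,k}} w_i$. Hence $wr = w \equiv_{D_{m,k}} w_i = w_j$, as required. I would flag that noticing this hidden consequence of reachability is what anchors the whole induction.

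For the inductive step I would take a derivation $wD_i \Rightarrow^{n+1} wrD_j$ and split off its last step, writing it as $wD_i \Rightarrow^{n} wr'D_{j'} \Rightarrow wr'r''D_j$, where the final step applies a production $D_{j'} \rightarrow r''D_j$ and $r = r'r''$. The induction hypothesis, applied to the length-$n$ initial segment, gives $wr' \equiv_{D_{m,k}} w_{j'}$. Moreover, the defining condition under which $D_{j'} \rightarrow r''D_j$ was placed into $P$ — namely that $r'' = \overline{\gamma}^R$ for some $(\gamma\alpha, \overline{\alpha}^R) \in C_{m,k}(w_{j'})$ with $w_{j'}r'' \equiv_{D_{m,k}} w_j$ — directly supplies $w_{j'}r'' \equiv_{D_{m,k}} w_j$.

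It then remains to chain these facts. Invoking right invariance (Claim~\ref{cla-ri}) on $wr' \equiv_{D_{m,k}} w_{j'}$ with the suffix $r''$ yields $wr'r'' \equiv_{D_{m,k}} w_{j'}r''$, and transitivity of $\equiv_{D_{m,k}}$ (it is an equivalence relation, being equality of the $D_{m,k}$-values) gives $wr = wr'r'' \equiv_{D_{m,k}} w_j$, completing the induction. I do not expect a deep obstacle here, since the genuinely technical content — that $\equiv_{D_{m,k}}$ is preserved under right concatenation — has already been discharged in Claim~\ref{cla-ri}; the only points demanding care are the base-case reachability observation and the clean decomposition of the last derivation step so that the induction hypothesis and the production's defining condition line up for a transitivity argument.
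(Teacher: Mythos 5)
Your proof is correct, and it rests on the same machinery as the paper's: induction on $n$, the right invariance of $\equiv_{D_{m,k}}$ from Claim~\ref{cla-ri}, and transitivity of the equivalence. The one substantive difference is where you cut the derivation. You peel off the \emph{last} step, $wD_i \Rightarrow^{n} wr'D_{j'} \Rightarrow wr'r''D_j$, so the induction hypothesis applies directly to a derivation that still begins with the original word $w$; the production condition $w_{j'}r'' \equiv_{D_{m,k}} w_j$ together with right invariance applied to $wr' \equiv_{D_{m,k}} w_{j'}$ then finishes the argument. The paper instead peels off the \emph{first} step, $wD_i \Rightarrow wrD_j \Rightarrow^{n} wrr'D_h$, and applies the induction hypothesis to the tail; this is slightly more delicate, since the tail is a derivation from $wrD_j$ and one must note that the productions usable from $D_j$ depend only on the representative $w_j$ (so the tail can be read as $w_jD_j \Rightarrow^{n} w_jr'D_h$, yielding $w_jr' \equiv_{D_{m,k}} w_h$) before right invariance transfers the conclusion back to $wrr'$. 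Your decomposition avoids that re-basing and is, if anything, the cleaner of the two. Your explicit handling of the base case --- that $w \equiv_{D_{m,k}} w_i$ is forced because $wD_i$ can only arise from a production $S \rightarrow wD_i$, which is in $P$ exactly when $w \equiv_{D_{m,k}} w_i$ --- is precisely the point the paper compresses into ``from the manner of constructing $P$''.
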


\begin{proof}
The proof is by induction on $n$. If $n=0$, then $i=j$ and from the 
manner of constructing $P$,  it holds $w \equiv_{D_{m,k}} w_j$,  thus, the claim holds.  
Assume that the claim holds for $n>0$ and consider a derivation 
 of the form $wD_i \Rightarrow wrD_j \Rightarrow^n wrr'D_h$ for 
some $D_h \in N$, $r' \in V^*$. 
 From the assumption and the form of $P$, it holds that $wr \equiv_{D_{m,k}} w_j$ and 
$w_j r' \equiv_{D_{m,k}} w_h$.  By Claim \ref{cla-ri}, we obtain that $wrr' \equiv_{D_{m,k}} w_j r' \equiv_{D_{m,k}} w_h$.
\end{proof}

\begin{cla}
For $n \ge 0$ and $r \in V^*$, there exists a derivation  of $G_w$ of the form 
$S \Rightarrow wD_i \Rightarrow^n wrD_j \Rightarrow wr$ if and only if $wr$ is in $rHI^n_{m,k}(w)$.
\label{cla-sf2}
\end{cla}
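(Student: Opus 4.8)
The plan is to argue by induction on $n$, with the crucial ingredient being the equivalence recorded just before Claim~\ref{cla-ri}: for a word $u \in V^{\ge 2k}$, one has $(\gamma\alpha, \overline{\alpha}^R) \in C_{m,k}(u)$ with $|\alpha|=k$ if and only if $u\overline{\gamma}^R \in rHI_{m,k}(u)$. A preliminary observation keeps everything well defined: since $rHI_{m,k}$ only appends a suffix, every word in $rHI^*_{m,k}(w)$ has $w$ as a prefix and length at least $|w| \ge 2k$, so $C_{m,k}$ and $\equiv_{D_{m,k}}$ remain applicable at each stage. For the base case $n=0$, the only derivation of the stated shape is $S \Rightarrow wD_i \Rightarrow w$, which forces $r=\lambda$ (and $i$ to be the index of $[w]$); this matches $rHI^0_{m,k}(w) = \{w\}$.

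For the inductive step I would split a derivation of length $n+1$ at its last nonterminal-rewriting step, writing $S \Rightarrow wD_i \Rightarrow^n wr'D_h \Rightarrow wr'sD_j \Rightarrow wr's$ with $r = r's$, where the last nontrivial step applies a production $D_h \rightarrow sD_j$. By Claim~\ref{cla-sf}, the nonterminal reached after $n$ steps satisfies $wr' \equiv_{D_{m,k}} w_h$, hence in particular $C_{m,k}(wr') = C_{m,k}(w_h)$, since $C_{m,k}$ is the first component of $D_{m,k}$. This class equality is the link that lets the productions issuing from $D_h$ mirror exactly the incompletion steps applicable to the concrete word $wr'$. In the forward direction, the induction hypothesis gives $wr' \in rHI^n_{m,k}(w)$; the production $D_h \rightarrow sD_j$ exists only because some $(\gamma\alpha, \overline{\alpha}^R) \in C_{m,k}(w_h) = C_{m,k}(wr')$ has $|\alpha|=k$ and $s = \overline{\gamma}^R$, so the crucial equivalence yields $wr's \in rHI_{m,k}(wr')$, whence $wr = wr's \in rHI^{n+1}_{m,k}(w)$.

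In the backward direction, if $wr \in rHI^{n+1}_{m,k}(w)$ then $wr \in rHI_{m,k}(u)$ for some $u \in rHI^n_{m,k}(w)$; writing $u = wr'$ and $wr = wr's$ with $s = \overline{\gamma}^R$, the equivalence gives $(\gamma\alpha, \overline{\alpha}^R) \in C_{m,k}(wr')$ with $|\alpha|=k$. The induction hypothesis supplies a derivation $S \Rightarrow wD_i \Rightarrow^n wr'D_h \Rightarrow wr'$, and Claim~\ref{cla-sf} again gives $wr' \equiv_{D_{m,k}} w_h$, so $(\gamma\alpha, \overline{\alpha}^R) \in C_{m,k}(w_h)$; consequently a production $D_h \rightarrow sD_j$ with $w_h s \equiv_{D_{m,k}} w_j$ is present. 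Prepending the first $n$ steps, applying this production, and closing with $D_j \rightarrow \lambda$ produces a derivation of the required length-$(n+1)$ form for $wr$.

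The main obstacle I anticipate is not the induction scheme but the bookkeeping that makes the correspondence exact in both directions. One must carefully transfer the $D_{m,k}$-class equality furnished by Claim~\ref{cla-sf} to equality of the $C_{m,k}$-components, and then verify that the presence of a production $D_h \rightarrow sD_j$ coincides, through the stated equivalence, with the applicability of a single $rHI_{m,k}$ step to the actual word $wr'$ rather than merely to its representative $w_h$. Pinning down this interplay between the grammar's state (a representative of an equivalence class) and the genuine word being built is where the argument demands the most care.
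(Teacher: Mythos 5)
Your proof is correct and follows essentially the same route as the paper's: induction on $n$, splitting off the last incompletion/production step, using Claim~\ref{cla-sf} to identify $C_{m,k}(wr')$ with $C_{m,k}$ of the representative of its class, and invoking the equivalence between membership of $(\gamma\alpha,\overline{\alpha}^R)$ in $C_{m,k}$ and applicability of one $rHI_{m,k}$ step. The preliminary remark that every word in $rHI^{*}_{m,k}(w)$ retains $w$ as a prefix (so stays in $V^{\ge 2k}$) is a point the paper leaves implicit, but otherwise the two arguments coincide.
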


\begin{proof}
The proof is by induction on $n$. If $n=0$, it obviously holds that $S \Rightarrow wD_i  \Rightarrow w$ 
if and only if $w$ is in $rHI^0_{m,k}(w)$. Assume that the claim holds for $n$ and consider the case for $n+1$. \\

(If Part) Let $wr' \in rHI^{n+1}_{m,k}(w)$. Then there exists $r, \gamma \in V^*$ such that 
$wr' = wr{\overline{\gamma}}^R \in rHI_{m,k}(wr)$ with $wr \in rHI^n_{m,k}(w)$. From the 
definition of $C_{m,k}$, $(\gamma \alpha, \overline{\alpha}^R)$ is in $C_{m,k}(wr)$ with 
$|\alpha| = k$. From the induction hypothesis and Claim \ref{cla-sf}, there exists a derivation : 
$S \Rightarrow wD_i \Rightarrow^n wrD_j$ with $wr \equiv_{D_{m,k}} w_j$. Since 
$(\gamma \alpha, \overline{\alpha}^R)$ is in $C_{m,k}(wr) = C_{m,k}(w_j)$, there exists 
the derivation $S \Rightarrow wD_i \Rightarrow^n wrD_j \Rightarrow wr{\overline{\gamma}}^R D_h \Rightarrow  wr{\overline{\gamma}}^R = wr'$ for some $D_h \in N$. \\

(Only If Part) If there exists the derivation $S \Rightarrow wD_i \Rightarrow^n wrD_j 
\Rightarrow wr{\overline{\gamma}}^R D_h$ $\Rightarrow  wr{\overline{\gamma}}^R$ for some
 $D_h \in N$, it holds that $wr \equiv_{D_{m,k}} w_j$ from Claim \ref{cla-sf}. Moreover, from 
the form of $P$, there exists $(\gamma \alpha, \overline{\alpha}^R) \in C_{m,k}(w_j) = C_{m,k}(wr)$. 
Hence, $wr{\overline{\gamma}}^R$ is in $rHI_{m,k}(wr)$. From the induction hypothesis, 
$wr \in rHI^n_{m,k}(w)$ so that $wr{\overline{\gamma}}^R \in rHI^{n+1}_{m,k}(w)$.
\end{proof}

It follows from the claim  that the language obtained by applying iterated bounded right hairpin incompletion to a singleton is regular. 

\begin{lem}
For any word $w \in V^*$ and $m, k \ge 1$,   a language $rHI^{*}_{m,k} (w)$ is regular.
\label{lem-single}
\end{lem}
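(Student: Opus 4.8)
The plan is to realize $rHI^{*}_{m,k}(w)$ as the language generated by the right-linear grammar $G_w$ built above, and then read off regularity from the shape of that grammar together with the characterization of its derivations furnished by Claim~\ref{cla-sf2}. Before invoking the machinery, however, I would first dispose of the degenerate case. If $|w| < 2k$, then $w$ admits no factorization $\delta\gamma\alpha\beta\overline{\alpha}^R$ with $|\alpha| = k$, so by the convention fixed in the Note we have $rHI_{m,k}(w) = \{w\}$; since the operation only appends symbols on the right, iterating leaves the length below $2k$ and the word unchanged, whence $rHI^{*}_{m,k}(w) = \{w\}$ is a singleton and trivially regular. It then remains to treat $w \in V^{\ge 2k}$, which is precisely the regime in which Definition~\ref{def-dmk} and the grammar $G_w$ are defined, and in which all iterates stay in $V^{\ge 2k}$.

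For $w \in V^{\ge 2k}$ I would first confirm that $G_w$ is a genuine finite right-linear grammar. Its nonterminal set is finite because $\equiv_{D_{m,k}}$ has finite index, and its production set is finite because each $C_{m,k}(w_i)$ consists of pairs of strings of length at most $m+k$, so only finitely many appended factors $r = \overline{\gamma}^R$ (with $|\gamma|\le m$) and only finitely many targets $D_j$ can occur. Next I would observe that, by the structure of $P$, every terminal derivation of $G_w$ is forced into a canonical shape: it begins with the unique start rule $S \to wD_i$ (where $[w]=[w_i]$), continues through a run of rules $D \to rD$ that append suffixes $\overline{\gamma}^R$ while keeping a single trailing nonterminal, and closes with some rule $D \to \lambda$. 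Hence each terminal derivation has exactly the form $S \Rightarrow wD_i \Rightarrow^n wrD_j \Rightarrow wr$ to which Claim~\ref{cla-sf2} applies, and conversely $L(G_w)$ contains only words of the form $wr$.

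Applying Claim~\ref{cla-sf2} in both directions and taking the union over $n \ge 0$ then yields
\[
L(G_w) \;=\; \bigcup_{n \ge 0} rHI^{n}_{m,k}(w) \;=\; rHI^{*}_{m,k}(w),
\]
and since $G_w$ is right-linear its generated language is regular, which completes the argument. I expect the only genuine bookkeeping to lie in the second step: one must check that \emph{every} terminal derivation factors into the start/append/close form so that Claim~\ref{cla-sf2} governs all of $L(G_w)$, and that every element of $rHI^{*}_{m,k}(w)$ is indeed of the form $wr$ (which holds because right hairpin incompletion only extends on the right). All the conceptual weight---right invariance and finiteness of the index, which together guarantee that the finite datum $D_{m,k}(\cdot)$ is all the information needed to keep applying the operation---has already been carried by the finite-index observation and Claim~\ref{cla-ri}, so beyond this routine verification the lemma follows immediately.
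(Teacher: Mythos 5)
Your proposal is correct and follows essentially the same route as the paper: dispose of the case $|w|<2k$ via the convention that the operation acts as the identity there, and for $w\in V^{\ge 2k}$ identify $rHI^{*}_{m,k}(w)$ with $L(G_w)$ by invoking Claim~\ref{cla-sf2} in both directions, concluding regularity from the right-linearity of $G_w$. The extra bookkeeping you flag (finiteness of $G_w$ and the canonical shape of its terminal derivations) is left implicit in the paper but is exactly the right verification.
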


\begin{proof}
In the case of $w \in V^* - V^{\ge 2k}$,  from the definition,  $rHI^{*}_{m,k} (w) = \{w\}$ 
is regular.  For $w \in V^{\ge 2k}$ it follows from Claim 
\ref{cla-sf2}  that there exists a derivation of $G_w$ which 
derives a terminal string $w'$ if and only if $w' \in rHI^{*}_{m,k}(w)$. Thus, we have 
that $L(G_w) =  rHI^{*}_{m,k} (w)$ which is regular.
\end{proof}

In order to show more general results, we need to prove the claims regarding the  
language  $rHI^{*}_{m,k} (w)$.

\begin{cla}
For $w_1, w_2 \in V^{\ge 2k}$ and $n \ge 0$, if $w_1 \equiv_{D_{m,k}} w_2$ then there 
exists a finite language $F \subseteq V^*$ such that $rHI^{n}_{m,k} (w_1) = w_1 F$ 
and $rHI^{n}_{m,k} (w_2) = w_2 F$. 
\label{cla-fin}
\end{cla}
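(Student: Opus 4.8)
The plan is to argue by induction on $n$, exploiting two structural facts. First, $rHI_{m,k}$ only appends material to the right, so every $rHI^{n}_{m,k}(w)$ has the form $w\cdot F$ for some finite $F\subseteq V^*$ (finiteness holds because one step appends $\overline{\gamma}^R$ with $|\gamma|\le m$, so $F\subseteq V^{\le nm}$). Second, the set of strings appended in a single step to a word $v\in V^{\ge 2k}$ is completely determined by $C_{m,k}(v)$, hence by the class of $v$ under $\equiv_{D_{m,k}}$. Concretely, by the equivalence recorded just after Definition \ref{def-dmk} --- that $(\gamma\alpha,\overline{\alpha}^R)\in C_{m,k}(v)$ with $|\alpha|=k$ iff $v\overline{\gamma}^R\in rHI_{m,k}(v)$ --- the one-step append set is $A(v)=\{\,\overline{\gamma}^R \mid (\gamma\alpha,\overline{\alpha}^R)\in C_{m,k}(v),\ |\alpha|=k\,\}$, which depends on $v$ only through $C_{m,k}(v)$.

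For the base case $n=0$ I would take $F=\{\lambda\}$, since $rHI^{0}_{m,k}(w_i)=\{w_i\}=w_i\{\lambda\}$ for $i=1,2$. For the inductive step, assume a finite $F$ with $rHI^{n}_{m,k}(w_1)=w_1F$ and $rHI^{n}_{m,k}(w_2)=w_2F$. Then $rHI^{n+1}_{m,k}(w_i)=\bigcup_{s\in F}rHI_{m,k}(w_is)$, so it suffices to compare the one-step images of $w_1s$ and $w_2s$ for each $s\in F$. Here I invoke Claim \ref{cla-ri}: since $w_1,w_2\in V^{\ge 2k}$ and appending never shortens a word, $w_1s,w_2s\in V^{\ge 2k}$, and right invariance yields $w_1s\equiv_{D_{m,k}}w_2s$, hence $C_{m,k}(w_1s)=C_{m,k}(w_2s)$ and therefore $A(w_1s)=A(w_2s)=:A_s$. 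Writing $rHI_{m,k}(w_is)=(w_is)\cdot A_s$, I then set $F'=\bigcup_{s\in F}sA_s$, a finite union of finite sets, and obtain $rHI^{n+1}_{m,k}(w_i)=w_i\,F'$ simultaneously for $i=1,2$, which closes the induction.

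The one point requiring care, and the main (if minor) obstacle, is the boundary convention of the \textbf{Note} after the definition: when $v$ admits no decomposition $v=\delta\gamma\alpha\beta\overline{\alpha}^R$ the operation fixes $v$, i.e.\ $rHI_{m,k}(v)=\{v\}$, whereas the product $v\cdot A(v)$ would give $\emptyset$. I would absorb this by redefining $A_s:=\{\lambda\}$ exactly when $A(w_1s)=\emptyset$. This is harmless because emptiness of $A(v)$, like $A(v)$ itself, is decided by $C_{m,k}(v)$ alone; hence the replacement is triggered for $w_1s$ and $w_2s$ simultaneously, and the two one-step images remain equal. With this convention the computation above goes through verbatim, and the remaining bookkeeping of finite languages is routine.
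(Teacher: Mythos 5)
Your proof is correct and follows essentially the same route as the paper's: induction on $n$ with base $F=\{\lambda\}$, using the right invariance of $\equiv_{D_{m,k}}$ (Claim \ref{cla-ri}) to get $w_1 s\equiv_{D_{m,k}} w_2 s$ for each $s\in F$, concluding that both words receive the same one-step suffix set, and setting $F'=\bigcup_{s\in F} s A_s$. Your only departure is that you extract the one-step suffix set $A(v)$ explicitly from $C_{m,k}(v)$ (and patch the empty case forced by the paper's \textbf{Note}), whereas the paper obtains the sets $F_r$ by appealing to the induction hypothesis at $n=1$; your reading makes that single step self-contained rather than deferred, which is a minor presentational improvement, not a different argument.
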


\begin{proof}
The proof is by induction on $n$. If $n=0$, it obviously holds that
$rHI^{0}_{m,k} (w_1) = w_1 F$ and $rHI^{0}_{m,k} (w_2) = w_2 F$, 
where $F=\{\lambda\}$.  We assume that the claim holds for up to $n$. 
Let $rHI^{n}_{m,k} (w_1) = w_1 F$ and $rHI^{n}_{m,k} (w_2) = w_2 F$ for some finite 
language $F$. For any  $r \in F$, it holds that $w_1 r \equiv_{D_{m,k}} w_2 r$ from 
Claim \ref{cla-ri}. Hence, from the induction hypothesis, there exists a finite 
language $F_r$ such that \[ rHI_{m,k} (w_1 r) = w_1 r F_r \text{ and } rHI_{m,k} (w_2 r) = w_2 r F_r. \] 
Therefore, it holds that 
\begin{align*}
rHI^{n+1}_{m,k} (w_1) &= rHI_{m,k} (w_1 F) = \bigcup_{r \in F} w_1 r  F_r = w_1 \bigcup_{r \in F} r  F_r = w_1 F', \\
rHI^{n+1}_{m,k} (w_2) &= rHI_{m,k} (w_2 F) = \bigcup_{r \in F} w_2 r  F_r = w_2 \bigcup_{r \in F} r  F_r = w_2 F', 
\end{align*}
where $F' = \bigcup_{r \in F} r  F_r$.
\end{proof}

\begin{cla}
For $w_1, w_2 \in V^{\ge 2k}$, if $w_1 \equiv_{D_{m,k}} w_2$ then there exists a 
regular language $R \subseteq V^*$ such that $rHI^{*}_{m,k} (w_1) = w_1 R$ and $rHI^{*}_{m,k} (w_2) = w_2 R$.
\label{cla-reg}
\end{cla}

\begin{proof}
From Claim \ref{cla-fin}, if  $w_1 \equiv_{D_{m,k}} w_2$,  then there exists a sequence 
of finite languages : $F_0, F_1, F_2, \cdots, $\ where $F_n \subseteq V^* (n \ge 0)$, 
with the property  that for $n \ge 0$,  $rHI^{n}_{m,k} (w_1) = w_1 F_n$ and 
$rHI^{n}_{m,k} (w_2) = w_2 F_n$. Then it holds that
\begin{align*}
rHI^{*}_{m,k} (w_1) &= \bigcup_{n \ge 0} rHI^{n}_{m,k} (w_1) = \bigcup_{n \ge 0} w_1 F_n = w_1 \bigcup_{n \ge 0} F_n, \\
rHI^{*}_{m,k} (w_2) &= \bigcup_{n \ge 0} rHI^{n}_{m,k} (w_2) = \bigcup_{n \ge 0} w_2 F_n = w_2 \bigcup_{n \ge 0} F_n. 
\end{align*}
Let  $R = \bigcup_{n \ge 0} F_n$. Then, we obtain $rHI^{*}_{m,k} (w_1) = w_1 R$ and 
$rHI^{*}_{m,k} (w_2) = w_2 R$. Recall that $w_1 R$ and $w_2 R$ are regular from 
Lemma \ref{lem-single}. The class of regular languages is closed under left derivative, 
so that $R$ is also regular.
\end{proof}

We are now in a position to show the main theorem in this section.  It is shown that iterated bounded one-sided hairpin incompletion can be simulated by several 
basic language operations, which leads to the following theorem.

\begin{thm}
Let $\mathcal{L}$ be a class of languages and $m, k \ge 1$. If $\mathcal{L}$ is closed 
under intersection with regular languages, concatenation with regular languages and 
finite union, then $\mathcal{L}$ is also closed under iterated  $m$-bounded right (left) $k$-hairpin incompletion.
\end{thm}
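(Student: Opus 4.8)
The plan is to show that iterated one-sided hairpin incompletion applied to an entire language $L$ can be written as a composition of the three guaranteed operations: intersection with a regular set, concatenation with a regular set, and finite union. The key observation, already assembled in the preceding claims, is that the effect of $rHI^*_{m,k}$ on a single word $w \in V^{\ge 2k}$ depends only on the equivalence class $[w]$ under $\equiv_{D_{m,k}}$: by Claim~\ref{cla-reg}, words in the same class $[w_i]$ satisfy $rHI^*_{m,k}(w) = w R_i$ for a \emph{single} regular language $R_i$ that depends only on $i$, not on $w$. Since the index $t$ is finite, this is exactly the structure needed to decompose the operation over $L$.

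\medskip

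First I would split $L$ according to the length condition: write $L = (L \cap V^{<2k}) \cup (L \cap V^{\ge 2k})$, where I abbreviate $V^{<2k} = \bigcup_{0 \le j < 2k}V^j$. The short part is handled trivially, since $rHI^*_{m,k}(w)=\{w\}$ for $w \in V^* - V^{\ge 2k}$, so $rHI^*_{m,k}(L \cap V^{<2k}) = L \cap V^{<2k}$, which lies in $\mathcal{L}$ by closure under intersection with the regular set $V^{<2k}$. Next I would partition the long part by equivalence class. For each representative $w_i$ ($1 \le i \le t$), the set $[w_i]$ of all words in $V^{\ge 2k}$ equivalent to $w_i$ is regular, because $\equiv_{D_{m,k}}$ has finite index and is right invariant (Claim~\ref{cla-ri})---so each class is a union of states of a finite automaton, recognizable by a DFA whose states are the classes together with a ``short-prefix'' bookkeeping component. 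This gives
\[
L \cap V^{\ge 2k} = \bigcup_{i=1}^{t} \bigl(L \cap [w_i]\bigr),
\]
with each $L \cap [w_i] \in \mathcal{L}$ by closure under intersection with the regular language $[w_i]$ and closure under finite union.

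\medskip

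Then I would apply the uniformity provided by Claim~\ref{cla-reg}. For a fixed class $[w_i]$, every $w \in [w_i]$ yields $rHI^*_{m,k}(w) = w R_i$ for the same regular $R_i$, hence
\[
rHI^*_{m,k}\bigl(L \cap [w_i]\bigr) = \bigcup_{w \in L \cap [w_i]} w R_i = \bigl(L \cap [w_i]\bigr) R_i,
\]
which belongs to $\mathcal{L}$ by closure under concatenation with the regular language $R_i$. Taking the finite union over $i$ and adjoining the short part, I obtain
\[
rHI^*_{m,k}(L) = \bigl(L \cap V^{<2k}\bigr) \cup \bigcup_{i=1}^{t} \bigl(L \cap [w_i]\bigr) R_i \in \mathcal{L}.
\]
The left-sided case follows by the symmetric construction (or by a reversal/mirror argument), using the ``left'' counterparts $lHI^*_{m,k}$ and the mirror-image equivalence relation.

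\medskip

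The step I expect to carry the real weight is justifying that each equivalence class $[w_i]$ is a \emph{regular} language and that the decomposition genuinely realizes $rHI^*_{m,k}(L)$ as stated. Claim~\ref{cla-reg} is the crux: it is what converts the per-word identity $rHI^*_{m,k}(w)=w R$ into a \emph{class-uniform} regular suffix $R_i$, and it is the only place where one must verify that the regular language $R_i$ does not secretly depend on the individual word $w$ rather than its class. Establishing regularity of $[w_i]$ is routine Myhill--Nerode bookkeeping given right invariance and finite index, but care is needed at the boundary where words of length close to $2k$ interact with the auxiliary $Suf_{\le k}$ and $Pref_{\le k}$ data in $D_{m,k}$; this is where an explicit finite automaton tracking the bounded suffix information makes the argument airtight.
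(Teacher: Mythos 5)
Your proof is correct and follows essentially the same route as the paper: split off the words shorter than $2k$, partition $L \cap V^{\ge 2k}$ into the finitely many $\equiv_{D_{m,k}}$-classes, use Claim~\ref{cla-reg} to obtain a class-uniform regular suffix language $R_i$ with $rHI^{*}_{m,k}(L\cap [w_i]) = (L\cap [w_i])R_i$, and reassemble by finite union. The only cosmetic difference is in certifying that each class is obtained by intersection with a regular set: you invoke a Myhill--Nerode-style automaton built from right invariance and finite index, whereas the paper writes $[w_i]$ explicitly as $L_1$ intersected with regular sets of the form $V^{*}xyV^{*}z$ and $V^{*}\cdot suf_{j+k-1}(w_i)$.
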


\begin{proof}
Let $L \in \mathcal{L}$ be the language over $V$. We can write $L = L_1 \cup L_2$ where
\begin{align*}
L_1 &= L \cap V^{2k} \cdot V^* = \{ w \in L \, | \, |w| \ge 2k \}, \\
L_2 &= L \cap \bigcup_{0 \le n \le 2k-1}V^n = \{ w \in L \, | \, |w| < 2k \}. 
\end{align*}
Note that $rHI^*_{m,k}(L) = rHI^*_{m,k}(L_1) \cup rHI^*_{m,k}(L_2) = rHI^*_{m,k}(L_1) \cup L_2$.
Since the number of the elements in $L_1 / \equiv_{D_{m,k}}$ is finite from the definition of 
$\equiv_{D_{m,k}}$, we can set  
$L_1 / \equiv_{D_{m,k}} = \{ [w_1], [w_2], \dots, [w_s] \, | \, w_i \in L_1 \text{ for } 1 \le i \le s \}$ 
for some $s \ge 0$. From the way of construction of $D_{m,k}(w_i)$, it holds that for $1\le i \le s$, 
\[ [w_i] =  L_1 \cap ( \bigcap_{(xy,z) \in C_{m,k}(w_i)} V^* xy V^* z ) \cap ( \bigcap_{0 \le j \le m} V^* \cdot suf_{j+k-1}(w_i) ), \] 
For $1 \le i \le s$,  since all words in $[w_i]$ are equivalent, it follows from Claim \ref{cla-reg} that 
there exists regular language $R_i$ such that $rHI^*_{m,k}([w_i]) = [w_i] R_i$. Moreover, it holds 
that $rHI^*_{m,k}(L_1) = \bigcup_{1 \le i \le s}  [w_i] R_i$. Thus, $rHI^*_{m,k}(L)$ can be 
constructed from $L$ by intersection with regular languages, concatenation with regular 
languages and finite union, which completes the proof. 
\end{proof}

As a corollary, we immediately obtain the following. 

\begin{cor}
Every $AFL$ is closed under iterated  $m$-bounded right (left) $k$-hairpin incompletion for any $m,k \ge 1$.
\end{cor}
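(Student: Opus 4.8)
The plan is to show that every AFL satisfies the three closure hypotheses of Theorem~2 --- closure under intersection with regular languages, concatenation with regular languages, and finite union --- after which the corollary follows at once by that theorem. Recall that an AFL is, by definition, a nontrivial family of languages closed under union, concatenation, Kleene plus, $\lambda$-free morphism, inverse morphism, and intersection with regular languages. I would verify each required property in turn.

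The first and third hypotheses are essentially immediate: closure under intersection with regular languages is built directly into the definition of an AFL, and closure under finite union follows from closure under binary union by a trivial induction on the number of languages being combined. The only point demanding care is closure under concatenation \emph{with regular languages}. Here I would invoke the standard fact that every nontrivial AFL (indeed every nontrivial trio) contains all of the regular languages; this relies on the family having at least one non-empty member, which nontriviality guarantees. Consequently, for any $L \in \mathcal{L}$ and any regular $R$, the concatenation $LR$ is a concatenation of two members of $\mathcal{L}$ (since $R \in \mathcal{L}$), and hence belongs to $\mathcal{L}$ by the AFL's closure under concatenation.

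Having checked all three hypotheses, I would finish by applying Theorem~2 to conclude that every AFL is closed under iterated $m$-bounded right (left) $k$-hairpin incompletion. The main --- and essentially only --- obstacle is recalling that the regular languages are contained in every nontrivial AFL; once that inclusion is in hand, everything else is routine bookkeeping against the AFL axioms.
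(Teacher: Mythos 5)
Your proof is correct and is exactly the argument the paper intends when it states that the corollary is ``immediately'' obtained from Theorem~2: check the three hypotheses against the AFL axioms, using the standard fact that concatenation with regular languages reduces to concatenation within the family. One minor imprecision: an AFL (closed under Kleene \emph{plus}, not star) is only guaranteed to contain all $\lambda$-free regular languages, not all regular languages; this is harmless, since for a regular $R$ with $\lambda \in R$ one simply writes $LR = L \cup L(R - \{\lambda\})$ and uses closure under union and under concatenation with the $\lambda$-free regular set $R - \{\lambda\}$.
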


It is known in \cite{PRS98}  that there exists  {\it no universal} regular grammar 
$G_u(x)=(V, \Sigma, P, x)$ with the property that for any regular grammar $G$, 
there exists a coding $w_G$ of $G$ such that $L(G)=L(G_u(w_G))$.  This can be strengthened 
in the form that no morphism $h$ can help to satisfy the equation $L(G)=h(L(G_u(w_G)))$. 
    
In this context, the next lemma shows that the bounded hairpin incompletion 
operation  can play a role of the universal-like grammar  for all regular languages. 

\begin{lem}
A language $L \subseteq V^*$ is regular if and only if there exists a word $w \in (V')^*$ 
and a weak coding $h: V' \rightarrow V$ such that $L = h(rHI^{*}_{1,1} (w) \cap {(V' - \{ \# \} )}^* V'' )$, 
where $\# \in V'$ and $V'' \subseteq V'$.
\label{renma2}
\end{lem}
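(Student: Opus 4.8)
My plan is to prove the two directions of the equivalence separately; the backward (``if'') direction is immediate from the machinery already in place, while the forward (``only if'') direction is where all the work lies.

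For the ``if'' direction, suppose $w\in(V')^{*}$, the weak coding $h$, the symbol $\#\in V'$ and $V''\subseteq V'$ are given. By Lemma~\ref{lem-single} the language $rHI^{*}_{1,1}(w)$ is regular; the set $(V'-\{\#\})^{*}V''$ is regular; and the class of regular languages is closed under intersection, so $rHI^{*}_{1,1}(w)\cap(V'-\{\#\})^{*}V''$ is regular. A weak coding is a (possibly erasing) morphism, and regular languages are closed under morphic images, so $L=h\bigl(rHI^{*}_{1,1}(w)\cap(V'-\{\#\})^{*}V''\bigr)$ is regular. This direction uses nothing beyond Lemma~\ref{lem-single} and standard closure properties.

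For the ``only if'' direction I would first pin down the exact behaviour of the operation when $m=k=1$. One application sends a word ending in $\overline a$ to $w\overline c$, where $ca$ is a length-two factor of the current word; by choosing $w$ over a subalphabet $U$ with $U\cap\overline U=\emptyset$ (possible since $\overline{\cdot}$ is a Watson--Crick involution), every appended symbol lies in $\overline U$, so every matched occurrence of $\alpha$ is forced to lie inside the original $w$. Hence $rHI^{*}_{1,1}(w)$ is exactly $w$ followed by all strings $\overline{c_{1}}\,\overline{c_{2}}\cdots\overline{c_{n}}$ for which $c_{1}$ is the left neighbour in $w$ of an occurrence of the complement of the last symbol of $w$, and each $c_{i+1}c_{i}$ is a length-two factor of $w$. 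In other words, the appended part traces a walk in the ``left-neighbour'' graph whose edge set is precisely the set of length-two factors of $w$. The single word $w$ thus realizes exactly the adjacencies traversed by one walk over its symbols.

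Given an arbitrary regular $L$, I would take a finite automaton $M$ for it and encode its transition (line) graph into one word $w$ over an enlarged alphabet $V'$: an unbarred symbol for each state and for each transition (the latter tagged with its input letter), laid out so that the length-two factors of $w$ are exactly the legal ``consecutive-transition'' adjacencies. Since one word realizes only the edges of a single covering walk, I would connect the transition graph by auxiliary joint edges and write a covering walk as $w$, marking every joint with a separator letter $\overline\#$ chosen distinct from $\#$; any walk step crossing a joint then appends $\#$. The last symbol of $w$ is a start marker (a final state of $M$, i.e.\ the start of a reverse run, absorbing the fact that the walk reads $w$ backwards, using closure of regular languages under reversal); $V''$ is the barred initial-state marker, so the regular intersection retains exactly those walks that have reached $q_{0}$, namely complete accepting runs; and $h$ erases every unbarred symbol (all of $w$) and every barred state-marker while sending each barred transition symbol to its input letter, so a survivor maps to the corresponding word of $L$. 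The hard part, and the main obstacle, is the converse containment: because the operation may match \emph{any} occurrence of $\alpha$, I must verify that every unintended factor—one created at a gadget junction, or by reading past the intended region, or a premature/wrong-state stop—is detected, either by injecting a $\#$ (deleted by $(V'-\{\#\})^{*}$) or by failing the $V''$ condition. Establishing that this error-detection is complete, so that exactly the accepting runs survive, is the crux; the disjointness $U\cap\overline U=\emptyset$ is the key lever, since it confines all matching to $w$ and makes the factor-graph analysis exact.
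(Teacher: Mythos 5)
Your overall strategy coincides with the paper's: the ``if'' direction is exactly the paper's (Lemma~\ref{lem-single} plus closure of the regular languages under intersection and erasing morphisms), and for the ``only if'' direction both you and the authors encode the transition structure of a finite-state device as the set of length-two factors of a single word over a ``catalogue'' subalphabet disjoint from its complementary image, use the pair $\#,\overline{\#}$ as a guard that injects the forbidden letter $\#$ whenever a match occurs at an unintended position, use $V''$ to retain only completed runs, and let the weak coding erase the catalogue and read off the input letters. The paper starts from a regular grammar and packs the emitted letter and the current nonterminal into a single composite symbol $[a,X]$, whereas you start from an automaton with separate state and transition symbols and repair the direction of reading by a reversal; that difference is cosmetic.

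Two substantive remarks. First, a concrete slip in your analysis of the operation: if $w$ genuinely lies in $U^{*}$ with $U\cap\overline U=\emptyset$, then the last symbol of $w$ is in $U$, hence $\alpha$ (its complementary image) lies in $\overline U$ and has \emph{no} occurrence in $w$; by the paper's convention $rHI^{*}_{1,1}(w)=\{w\}$ and nothing is ever appended, so your description of $rHI^{*}_{1,1}(w)$ as ``$w$ followed by all walks'' is vacuous as stated. The seed --- your start marker, the paper's trailing $[\lambda,S]$ --- must be the one symbol of $w$ lying on the \emph{appended} side of the involution; your later stipulation that $V''$ consists of barred symbols implicitly presupposes this, but you should say it, since otherwise the very first application never fires. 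Second, the ``crux'' you defer (completeness of the error detection) is precisely what the paper's encoding renders immediate: rather than a covering walk with occasional joints, it lists every production as its own guarded block $\overline{\#}\;\overline{[a,X_j]}\;\overline{[b,X_i]}$, so every occurrence of a matchable symbol is immediately preceded either by the intended $\gamma$ or by $\overline{\#}$, and the two-way correspondence between derivations and extension sequences then follows by a routine induction on the number of applications. Adopting that degenerate ``walk'' (every step a joint) removes your remaining obstacle; as proposed, however, the error-detection argument you identify as the hard part is still owed.
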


\begin{proof}
(If Part)  This  clearly holds,  because the class of the regular languages is closed 
under iterated bounded right hairpin incompletion, intersection and weak codings.

(Only If Part) For a regular grammar $G = (N, V, P, S)$, we construct $V'$,$V''$, 
$w \in V$ and $h: V' \rightarrow V$ as follows:
\begin{itemize}
\item $V' = \{ [a, X] \, | \, a \in V, X \in N \cup \{ \lambda \} \} \cup  \{ \overline{[a, X]} \, | \, a \in V, X \in N \} \cup  \{ \#, \overline{ \# } \}$,
\item $V'' =  \{ [a, \lambda] \, | \, a \in V \}$,
\item $\displaystyle w = ( \prod_{X_i \rightarrow aX_j \in P, \, b \in V}  \overline{ \# } \, \overline{[a, X_j]} \, \overline{[b, X_i]} ) \cdot [\lambda, S]$,
\item $h(A) = a$ for $A = [a, X] \in \{ [a, X] \, | \, a \in V, X \in N \cup \{ \lambda \} \}$, $h(A) = \lambda$ otherwise.
\end{itemize}
Note that for any $n \ge 0$ and $w' =  \delta \gamma \alpha \beta \overline{\alpha}^R  \in rHI^{n}_{1,1} (w) \cap {(V' - \{ \# \} )^*}$ with $|\alpha| = |\gamma| = 0$,  if $w \overline{\gamma}^R \in rHI^{n+1}_{1,1} (w) \cap {(V' - \{ \# \} )^*}$,  then $\gamma$ is the symbol just right of $\overline{ \# }$. Then, from the way of construction of $w$, it holds that there exists a derivation of $G$, 
\[S \Rightarrow a_1 X_1 \Rightarrow a_1 a_2 X_2 \Rightarrow \dots \Rightarrow a_1 a_2 \dots a_{n-1} X_{n-1} \Rightarrow a_1 a_2 \dots a_{n-1} a_n, \]
if and only if
\[ w' = ( \prod_{X_i \rightarrow aX_j \in P, \, b \in V}  \overline{ \# } \, \overline{[a, X_j]} \, \overline{[b, X_i]} )  [\lambda, S] [a_1, X_1] [a_2, X_2] \dots [a_{n-1}, X_{n-1}] [a_n, \lambda] \]
is in $rHI^{n}_{1,1} (w) \cap {(V' - \{ \# \} )^*}$, which can be shown by induction on $n$.  By 
applying $h$, we obtain $L(G) = h(rHI^{*}_{1,1} (w) \cap {(V' - \{ \# \} )}^* V'' )$.
\end{proof}

We note that Theorem 3 in \cite{hl10} proves the {\it only if} part of this 
 lemma for the iterated (unbounded) hairpin lengthening. Thus, 
Lemma \ref{renma2} complements the result for the case of bounded hairpin 
lengthening.

\subsection{Iterated bounded hairpin incompletion}

In this section, we consider the closure property of iterated bounded hairpin incompletion. For the (unbounded) hairpin lengthening operation, the paper \cite{hl10} has proved that the family of context-free languages is closed 
under iterated hairpin lengthening in Theorem 4. We will show that the 
result also holds for the case of iterated bounded hairpin lengthening, 
in a more general setting of AFL-like formulation.

The proof is based on the similar idea to the previous  section and Claim \ref{cla-ri}, \ref{cla-sf}, \ref{cla-sf2} are corresponding to Claim \ref{cla-rili}, \ref{cla-sfcf}, \ref{cla-sfcf2} (below), respectively.

In order to consider both-sided hairpin incompletion, we modify the equivalence relation.

\begin{de}
For $m, k \ge 1$ and the word $w \in V^{\ge 2k}$, $C'_{m,k}(w)$, $D'_{m,k}(w)$ and $E_{m,k}(w)$ are defined by
\begin{align*}
C'_{m,k}(w) &= \{ (z, yx) \, | \, yx \in \bigcup_{0 \le i \le m} Inf_{i+k}(w), \, |y|=k, \\
                      & \hspace*{18mm}  w=w_1 yx w_2, \, z \in Pref_{\le k}(w_1) \cap Suf_{\le k}( {\overline{y}}^R ) \}, \\
D'_{m,k}(w) &= ( C'_{m,k}(w), \bigcup_{0 \le i \le m} \{ pref_{i+k-1}(w) \} ), \\
E_{m,k}(w) &=  < D_{m,k}(w), D'_{m,k}(w) >.
\end{align*}
, where $D_{m,k}(w)$ is the relation defined in Definition \ref{def-dmk}.

The binary relation $\equiv_{E_{m,k}}$ is defined as $w_1 \equiv_{E_{m,k}} w_2$ if $E_{m,k}(w_1) = E_{m,k}(w_2)$ for $w_1, w_2 \in V^{\ge 2k}$. 
\end{de}
The binary relation $\equiv_{E_{m,k}}$ is clearly an equivalence relation and of finite index. Note that $D_{m,k}$ and $D'_{m,k}$ are symmetrically defined.

We show that the  equivalence relation $\equiv_{E_{m,k}}$ is right invariant and left invariant.

\begin{cla}
The equivalence relation $\equiv_{E_{m,k}}$ is right invariant and left invariant, that is,  for $w_1, w_2 \in V^{\ge 2k}$, if $w_1 \equiv_{E_{m,k}} w_2$ then for any $r, l \in V^*$, $w_1 r \equiv_{D_{m,k}} w_2 r$ and $l w_1 \equiv_{E_{m,k}} l w_2$ holds.
\label{cla-rili}
\end{cla}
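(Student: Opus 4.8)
The plan is to mirror the argument used for Claim~\ref{cla-ri}, but now carry both components $D_{m,k}$ and $D'_{m,k}$ of the combined invariant $E_{m,k}$ simultaneously. Since $E_{m,k}(w) = \langle D_{m,k}(w), D'_{m,k}(w)\rangle$ is a pair, the equality $E_{m,k}(w_1) = E_{m,k}(w_2)$ holds exactly when both $D_{m,k}(w_1)=D_{m,k}(w_2)$ and $D'_{m,k}(w_1)=D'_{m,k}(w_2)$ hold. I would first reduce the claim to two independent statements: that $\equiv_{E_{m,k}}$ is right invariant, and that it is left invariant. The key structural observation is the symmetry noted in the text: $D'_{m,k}$ is the mirror image of $D_{m,k}$ under reversal, so left invariance for $D'_{m,k}$ is the exact dual of right invariance for $D_{m,k}$.

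For the right-invariance half, I would proceed by induction on $|r|$ exactly as in Claim~\ref{cla-ri}, reducing to the single-letter step $w_1 r a \equiv_{E_{m,k}} w_2 r a$ for $a \in V$. The heart of the matter is to verify that each of the two components of $E_{m,k}(w_1 ra)$ can be reconstructed from the corresponding component of $E_{m,k}(w_1 r)$ alone. For the $D_{m,k}$ component this is precisely the computation already carried out in the proof of Claim~\ref{cla-ri} (the formulas for $\bigcup_{0 \le i \le m}\{suf_{i+k-1}(w_1 ra)\}$ and $C_{m,k}(w_1 ra)$), so I would simply invoke it. For the $D'_{m,k}$ component, I must check that appending $a$ on the \emph{right} does not disturb the prefix-based data: the set $\bigcup_{0 \le i \le m}\{pref_{i+k-1}(w_1 ra)\}$ depends only on the left end of the word and hence equals $\bigcup_{0 \le i \le m}\{pref_{i+k-1}(w_1 r)\}$ whenever $|w_1 r| \ge i+k-1$, and similarly the infix occurrences $yx \in Inf_{i+k}$ together with the prefix condition $z \in Pref_{\le k}(w_1)$ recorded in $C'_{m,k}$ are preserved because a right-appended letter can only create new infixes at the right boundary, whose prefix-witness $z$ is governed by the unchanged left part. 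Thus $D'_{m,k}(w_1 ra)$ is determined by $D'_{m,k}(w_1 r)$, and since $E_{m,k}(w_1 r)=E_{m,k}(w_2 r)$ by the induction hypothesis, the same reconstruction yields $E_{m,k}(w_1 ra)=E_{m,k}(w_2 ra)$.

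For the left-invariance half, I would appeal to the reversal symmetry rather than repeat the calculation. Writing $\overline{\cdot}^R$ for the reversal-complement operation and using the identity $(\overline{x})^R = \overline{x^R}$ recorded in the Preliminaries, one checks that $D'_{m,k}(w)$ and $D_{m,k}(w^R)$ correspond under this reversal, so that left-multiplication by $l$ corresponds to right-multiplication by $l^R$ on the reversed word. Consequently the left-invariance of the $D'_{m,k}$ component follows from the already-established right-invariance computation applied to reversed words, and symmetrically the $D_{m,k}$ component is preserved under left-appends for the same boundary reason (a left-appended letter only affects prefix data and the left-anchored suffix witnesses). Combining both components gives $l w_1 \equiv_{E_{m,k}} l w_2$.

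The main obstacle I anticipate is \emph{bookkeeping the cross-interaction between the two components rather than any single computation}. Each component in isolation behaves cleanly under one-sided appends, but I must be careful that a one-sided append genuinely leaves the opposite component's data untouched; in particular the infix sets $Inf_{i+k}$ appearing in both $C_{m,k}$ and $C'_{m,k}$ are two-sided objects, so I need to argue precisely that a right-append alters only those infixes, suffix-witnesses, and suffix strings straddling the right boundary (handled by the $D_{m,k}$ reconstruction) while the prefix-witnesses and prefix strings of $D'_{m,k}$ see no change. Making this boundary-locality argument airtight for the witness sets $z \in Suf_{\le k}(\cdot)$ versus $z \in Pref_{\le k}(\cdot)$ is the delicate step; once it is in place, the two inductions and the reversal duality close the proof routinely.
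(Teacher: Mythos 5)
Your overall architecture (induction on $|r|$ for right invariance, a reversal/symmetry argument for left invariance) matches the paper's, but there is a genuine error at exactly the step you flag as delicate. You assert that a right-append leaves the $D'_{m,k}$ component untouched, i.e.\ that $D'_{m,k}(w_1 ra)$ is determined by $D'_{m,k}(w_1 r)$ alone and that ``the prefix-witnesses and prefix strings of $D'_{m,k}$ see no change.'' This is false. Appending $a$ creates new infix occurrences $yx$ at the right boundary, namely the strings $suf_{i+k-1}(w_1 r)\cdot a$ for $0 \le i \le m$, and each of these contributes new pairs $(z,\, suf_{i+k-1}(w_1 r)\cdot a)$ to $C'_{m,k}(w_1 ra)$, with $z \in Pref_{\le k}(w_1 r) \cap Suf_{\le k}\bigl( \overline{suf_{i+k-1}(w_1 r)\cdot a}^R \bigr)$. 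To write these pairs down you must know the suffixes $suf_{i+k-1}(w_1 r)$, and that information lives in the second component of $D_{m,k}(w_1 r)$, not anywhere in $D'_{m,k}(w_1 r)$. So $C'_{m,k}(w_1 ra)$ genuinely grows under a right-append, and the new elements depend on the \emph{opposite} component's data. This cross-dependence is the entire reason the paper bundles the two halves into the pair $E_{m,k}$: the statement the paper actually proves is that $D'_{m,k}(w_1 ra)$ is constructible from $E_{m,k}(w_1 r)$ (both components together), after which the induction hypothesis $E_{m,k}(w_1 r) = E_{m,k}(w_2 r)$ yields $D'_{m,k}(w_1 ra) = D'_{m,k}(w_2 ra)$. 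If your ``boundary-locality'' principle were true, each component would be separately invariant under appends on either side and the pair $E_{m,k}$ would be unnecessary here.

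The damage is contained: your final sentence invokes the full equality $E_{m,k}(w_1 r)=E_{m,k}(w_2 r)$, so the induction closes once the reconstruction of $D'_{m,k}(w_1 ra)$ is restated with the correct input $E_{m,k}(w_1 r)$ and the new pairs are exhibited explicitly as above. The same correction is needed in your left-invariance half (whether done by reversal duality or symmetrically, as in the paper): prepending a letter adds to $C_{m,k}$ new pairs whose $xy$-parts extend $pref_{i+k-1}(w_1)$, so the update of the $D_{m,k}$ component under a left-append requires the prefix data stored in $D'_{m,k}(w_1)$.
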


\begin{proof}
We firstly show that for $r \in V^*$, $w_1 r \equiv_{D_{m,k}} w_2 r$. The proof is by induction on the length of $r$. If $|r|=0$, it clearly holds. We assume that the claim holds for $n$, i.e.,  $w_1 r \equiv_{E_{m,k}} w_2 r$ with $|r|=n$. Let $a$ be a symbol in $V$. 

[Proof of $D_{m,k}(w_1 ra) = D_{m,k}(w_2 ra)$] 
It can be shown by the same way as Claim \ref{cla-ri}. 

[Proof of $D'_{m,k}(w_1 ra) = D'_{m,k}(w_2 ra)$] 
We construct $D'_{m,k}(w_1 ra)$ from \textit{only} $E_{m,k}(w_1 r)$ as follows:
\begin{align*}
\bigcup_{0 \le i \le m} \{ pref_{i+k-1}(w_1 ra) \} = &\{ pref_{i+k-1}(w_1 r) \, | \, 0 \le i \le m, \, |w_1 r| \ge i+k-1  \} \\
(&\cup \{ w_1 ra \} \text{ if }  |w_1 r| < m+k-1), 
\end{align*}
\begin{align*}
C'_{m,k}(w_1 ra) =&  \, C'_{m,k}(w_1 r) \\
&\cup \, \{ ( \lambda, suf_{i+k-1}(w_1r) \cdot a) \, | \, 0 \le i \le m, \, |w_1 r| \ge i+k-1 \} \\
&\cup \, \{ (z, suf_{i+k-1}(w_1r) \cdot a) \, | \, 0 \le i \le m, \, |w_1 r| \ge |z| + i+k-1, \\
&\ \hspace{25mm} z \in Pref_{\le k}(w_1 r) \cap Suf_{\le k}( \overline{suf_{i+k-1}(w_1r) \cdot a}^R  ) \}.
\end{align*}
Note that for $0 \le i \le m$, $z \in Pref_{\le k}(w_1 r) \cap Suf_{\le k}( \overline{suf_{i+k-1}(w_1 r) \cdot a}^R  )$ with $|w_1 r| \ge |z| + i+k-1$ and some $z' \in V^*$,  $w_1 r a$ can be represented as $w_1 r a = z \cdot z' \cdot suf_{i+k-1}(w_1r) \cdot a$. Hence, $(z, suf_{i+k-1}(w_1r) \cdot a)$ is in $C'_{m,k}(w_1 ra)$.

Since $E_{m,k}(w_1 r) = E_{m,k}(w_2 r)$, we can construct $D'_{m,k}(w_2 ra)$ from \textit{only} $E_{m,k}(w_1 r)$ in the same way. Therefore, it holds that $D'_{m,k}(w_1 ra) = D'_{m,k}(w_2 ra)$.
From $D_{m,k}(w_1 ra) = D_{m,k}(w_2 ra)$ and $D'_{m,k}(w_1 ra) = D'_{m,k}(w_2 ra)$, we eventually  get $w_1 ra\equiv_{E_{m,k}} w_2 ra$. 

For the left invariance of $\equiv_{E_{m,k}}$, we can show in the symmetrical manner.
\end{proof}

\subsubsection*{[Linear grammar $G_L$]}

For the proof of Theorem \ref{ibhi} (below) regarding $m$-bounded $k$-hairpin incompletion, we need to construct a linear grammar. For $L \subseteq V^*$, let $L / \equiv_{E_{m,k}} = \{ A_1, A_2, \dots, A_u  \}$ for some $u \ge 1$ and $V^* / \equiv_{E_{m,k}} = \{ [w_1], [w_2], \dots, [w_s]  \}$ for some $s \ge 1$, where $w_i$  is the representative of  $[w_i]$. A linear grammar $G_L = ( N, T, P, S )$ is constructed as follows:
\begin{align*}
N =& \{ S \} \cup \{ E_i \, | \, 0 \le i \le s \}, \\
T =& V \cup \{ a_i \, | \, 0 \le i \le u \} \cup \{ \$ \}, \\
P =& \{ S \rightarrow E_i a_j \, | \, \text{For any } w \in A_j, w  \equiv_{E_{m,k}} w_i  \} \\
   &\cup \{ E_i \rightarrow r E_j \, | \, (\gamma \alpha, \overline{\alpha}^R) \in C_{m,k}(w_i), |\alpha|=k, r = {\overline{\gamma}}^R ,  w_i r  \equiv_{E_{m,k}} w_j  \} \\
   &\cup \{ E_i \rightarrow E_j l \, | \, (\overline{\alpha}^R,  \alpha \gamma ) \in C'_{m,k}(w_i), |\alpha|=k, l = {\overline{\gamma}}^R,  l w_i  \equiv_{E_{m,k}} w_j  \} \\
   &\cup \{ E_i \rightarrow \$ \, | \,  0 \le i \le s \}.
\end{align*}

We set $R_P = \{ r \, | \, E_i \rightarrow r E_j \in P  \} \cup \{ \lambda \}$ and $L_P = \{ l \, | \, E_i \rightarrow E_j l \in P  \} \cup \{ \lambda \}$.
\begin{cla}
Let $0 \le p \le u$ and $E_i, E_j \in N$.   For $n \ge 0$, if a derivation of $G_L$ is of the form $E_i a_p \Rightarrow^n r_1 \dots r_n E_j l_n \dots l_1 a_p$, then for any $w \in A_p$, it holds that $l_n \dots l_1 w r_1 \dots r_n \equiv_{E_{m,k}}  w_j$, where for each $1 \le h \le n$, $r_h \in R_P$, $l_h \in L_P$, one of $r_h$ and $l_h$ is $\lambda$ and the other is not $\lambda$.
\label{cla-sfcf}
\end{cla}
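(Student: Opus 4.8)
The plan is to mimic the proof of Claim~\ref{cla-sf}, replacing its single right-invariance argument by the two-sided invariance established in Claim~\ref{cla-rili} and carrying an induction on the length $n$ of the derivation. Throughout, I abbreviate the string accumulated around $w$ after $n$ steps by $u_n = l_n \cdots l_1\, w\, r_1 \cdots r_n$, and the goal is to show $u_n \equiv_{E_{m,k}} w_j$ whenever $E_i a_p \Rightarrow^n r_1 \cdots r_n E_j l_n \cdots l_1 a_p$ and $w \in A_p$.

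For the base case $n = 0$, the sentential form is $E_i a_p$ itself, so $i = j$ and no $r_h, l_h$ are produced. Since the derivation issues from $S$, the rule $S \to E_i a_p$ must belong to $P$, and by the construction of $G_L$ this rule is present exactly when every $w \in A_p$ satisfies $w \equiv_{E_{m,k}} w_i$. Hence $u_0 = w \equiv_{E_{m,k}} w_i = w_j$, as required.

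For the inductive step I would split off the \emph{last} derivation step rather than the first, so that the initial $n$ steps already have the form covered by the induction hypothesis. Thus write $E_i a_p \Rightarrow^n r_1 \cdots r_n E_{j'} l_n \cdots l_1 a_p \Rightarrow r_1 \cdots r_{n+1} E_j l_{n+1} \cdots l_1 a_p$, so that by the induction hypothesis $u_n \equiv_{E_{m,k}} w_{j'}$. The last step uses exactly one of the two families of productions, which is precisely what guarantees that one of $r_{n+1}, l_{n+1}$ is $\lambda$ and the other is not. If it is a right rule $E_{j'} \to r_{n+1} E_j$, then by the construction of $P$ we have $w_{j'} r_{n+1} \equiv_{E_{m,k}} w_j$, and since $l_{n+1} = \lambda$ gives $u_{n+1} = u_n r_{n+1}$, right invariance from Claim~\ref{cla-rili} yields $u_{n+1} = u_n r_{n+1} \equiv_{E_{m,k}} w_{j'} r_{n+1} \equiv_{E_{m,k}} w_j$. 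Symmetrically, if it is a left rule $E_{j'} \to E_j l_{n+1}$, then $l_{n+1} w_{j'} \equiv_{E_{m,k}} w_j$ and $r_{n+1} = \lambda$ gives $u_{n+1} = l_{n+1} u_n$, so left invariance yields $u_{n+1} = l_{n+1} u_n \equiv_{E_{m,k}} l_{n+1} w_{j'} \equiv_{E_{m,k}} w_j$. Either way $u_{n+1} \equiv_{E_{m,k}} w_j$, closing the induction.

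The one genuinely new ingredient compared to Claim~\ref{cla-sf} is that the context now grows on both sides, so the point to get right is the bookkeeping: I must keep the accumulated word $u_n$ in the fixed shape $l_n \cdots l_1 w r_1 \cdots r_n$ and invoke the correct one of the two invariances at each step. The availability of both right and left invariance in Claim~\ref{cla-rili} is exactly what makes this go through; crucially, no commutation between left- and right-extensions is needed, since each derivation step appends to only one side and $\equiv_{E_{m,k}}$ is preserved under that single one-sided extension.
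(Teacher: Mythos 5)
Your proof is correct and follows essentially the same route as the paper's: induction on the derivation length $n$, using the two-sided invariance of Claim~\ref{cla-rili} to transfer the equivalence across each appended $r_h$ or $l_h$, with the base case justified by the form of the rules $S \rightarrow E_i a_p$. The only difference is that you peel off the \emph{last} derivation step where the paper peels off the \emph{first} (the paper then uses invariance to replace $wr'$ by the representative $w_j$ before invoking the induction hypothesis); your variant is marginally cleaner since the induction hypothesis applies verbatim to the prefix derivation.
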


\begin{proof}
The proof is by induction on $n$. If $n=0$, then $i=j$ and from the manner of constructing $P$, for any $w \in A_p$, it holds that $w \equiv_{E_{m,k}}  w_j$, thus the claim holds. 
Assume that the claim holds for $n>0$ and consider a derivation  
of the form 
\begin{align*}
&E_i a_p \Rightarrow r' E_j  a_p\Rightarrow^n r' r_1 \dots r_n E_h l_n \dots l_1  a_p  \\
&( \, E_i a_p \Rightarrow E_j l' a_p\Rightarrow^n r_1 \dots r_n E_h l_n \dots l_1 l' a_p \, )
\end{align*}
for some $E_h \in N$, $r' \in R_P$ $(l' \in L_P)$. 
From the assumption and the form of $P$, for any $w \in A_p$, it holds that $wr' \equiv_{E_{m,k}} w_j$ $(l' w \equiv_{E_{m,k}} w_j)$ and $l_n \dots l_1 w_j r_1 \dots r_n \equiv_{E_{m,k}} w_h$.  
By Claim \ref{cla-rili}, we obtain that 
\begin{align*}
&l_n \dots l_1wr' r_1 \dots r_n \equiv_{E_{m,k}} l_n \dots l_1 w_j r_1 \dots r_n \equiv_{E_{m,k}} w_h  \\
&( \, l_n \dots l_1 l' wr_1 \dots r_n \equiv_{E_{m,k}} l_n \dots l_1 w_j r_1 \dots r_n \equiv_{E_{m,k}} w_h \, ).
\end{align*}
\end{proof}

\begin{cla}
A word $r_1 \dots r_n \$ l_n \dots l_1 a_i$  is generated by $G_L$ if and only if for any $w \in A_i$, $l_n \dots l_1 w r_1 \dots r_n$ is in $HI^n_{m,k}(L)$, where for each $1 \le h \le n$, $r_h \in R_P$, $l_h \in L_P$, one of $r_h$ and $l_h$ is $\lambda$ and the other is not $\lambda$.
\label{cla-sfcf2}
\end{cla}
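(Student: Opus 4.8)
The plan is to prove the biconditional by induction on $n$, following the template of Claim \ref{cla-sf2} but now tracking additions at both ends simultaneously. The central tool will be the characterization noted after Definition \ref{def-dmk} together with its mirror image: $(\gamma\alpha, \overline{\alpha}^R) \in C_{m,k}(v)$ with $|\alpha| = k$ holds exactly when $v\overline{\gamma}^R \in rHI_{m,k}(v)$, and symmetrically $(\overline{\alpha}^R, \alpha\gamma) \in C'_{m,k}(v)$ with $|\alpha| = k$ holds exactly when $\overline{\gamma}^R v \in lHI_{m,k}(v)$, the latter being immediate from the symmetric definitions of $C'_{m,k}$ and $lHI_{m,k}$. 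For a fixed admissible sequence $(r_h, l_h)_{1 \le h \le n}$ (each step having exactly one of $r_h, l_h$ nonempty) I will match grammar step $h$ with the $h$-th incompletion operation: a right production $E \to rE$ contributing $r_h$ corresponds to an application of $rHI_{m,k}$, and a left production $E \to El$ contributing $l_h$ to an application of $lHI_{m,k}$. A short check of where each production places its letter around the nonterminal shows that the two orderings agree, so the innermost letters $r_1, l_1$ are emitted first and the outermost $r_n, l_n$ last, matching the order of operations on $l_n \cdots l_1 w r_1 \cdots r_n$.

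For the base case $n = 0$ the coded word $\$ a_i$ is generated exactly when $G_L$ admits a derivation $S \Rightarrow E_j a_i \Rightarrow \$ a_i$, which by the construction of the $S$-productions happens precisely when $A_i$ is a class of $L/\equiv_{E_{m,k}}$; since every $w \in A_i \subseteq L = HI^0_{m,k}(L)$, the two sides hold simultaneously.

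For the inductive step I would peel off the last $E$-production, namely the one applied to the innermost nonterminal in a derivation of the shape $S \Rightarrow E_{i_0} a_i \Rightarrow^{n+1} r_1 \cdots r_{n+1} E_{i_{n+1}} l_{n+1} \cdots l_1 a_i \Rightarrow r_1 \cdots r_{n+1}\$ l_{n+1} \cdots l_1 a_i$. By Claim \ref{cla-sfcf}, after the first $n$ steps we know $v_0 := l_n \cdots l_1 w r_1 \cdots r_n \equiv_{E_{m,k}} w_{i_n}$ for every $w \in A_i$, hence $C_{m,k}$ and $C'_{m,k}$ evaluated at $v_0$ coincide with those at $w_{i_n}$. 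If the last step is the right production $E_{i_n} \to r_{n+1} E_{i_{n+1}}$ (so $l_{n+1} = \lambda$), it is present iff $(\gamma\alpha, \overline{\alpha}^R) \in C_{m,k}(w_{i_n}) = C_{m,k}(v_0)$ with $r_{n+1} = \overline{\gamma}^R$, which by the correspondence above is equivalent to $v_0 r_{n+1} \in rHI_{m,k}(v_0) \subseteq HI_{m,k}(v_0)$; combined with the induction hypothesis $v_0 \in HI^n_{m,k}(L)$ this yields $l_{n+1}\cdots l_1 w r_1 \cdots r_{n+1} \in HI^{n+1}_{m,k}(L)$, and the implication reverses step by step. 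The left-production case is entirely symmetric, using $C'_{m,k}$, $lHI_{m,k}$ and prepending $l_{n+1}$, and in both cases the equivalence class of the new intermediate word is propagated correctly by the right/left invariance of Claim \ref{cla-rili}, which closes the induction.

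The hard part will be the two-sided bookkeeping rather than any single estimate: unlike the one-sided claim, letters now accumulate from both ends, so I must verify (i) that the indexing of the admissible sequence is aligned with the order in which the linear grammar emits symbols on either side of the central $\$$, and (ii) that in the ``if'' direction the operation sequence witnessing membership in $HI^{n+1}_{m,k}(L)$ is aligned with the step-pattern of the fixed tuple, so that the factorization $v_0 r_{n+1}$ (respectively $l_{n+1} v_0$) genuinely exhibits the final operation as a right (respectively left) incompletion of a word already lying in $HI^n_{m,k}(L)$. Once the left/right symmetry of the definitions of $D_{m,k}$ and $D'_{m,k}$ is exploited to halve the work, the remaining verifications reduce to the same routine factorization arguments already carried out in the one-sided proof.
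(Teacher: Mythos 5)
Your proposal is correct and follows essentially the same route as the paper's own proof: induction on $n$, using Claim \ref{cla-sfcf} to identify the equivalence class of the intermediate word $l_n\cdots l_1wr_1\cdots r_n$, the correspondence between membership of $(\gamma\alpha,\overline{\alpha}^R)$ in $C_{m,k}$ (resp.\ $(\overline{\alpha}^R,\alpha\gamma)$ in $C'_{m,k}$) and applicability of a right (resp.\ left) incompletion, and the invariance of Claim \ref{cla-rili} to propagate classes. The alignment you describe --- grammar step $h$ emitting $r_h$ or $l_h$ matching the $h$-th incompletion, with the innermost letters produced first --- is exactly the bookkeeping the paper's proof relies on.
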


\begin{proof}
The proof is by induction on $n$. If $n=0$, it obviously holds that $S \Rightarrow E_i a_j  \Rightarrow \$ a_j$ 
if and only if for any $w \in A_j$, $w$ is in $HI^0_{m,k}(L)$. 
Assume that the claim holds for $n$ and consider the case for $n+1$. \\

(If Part) Let $l_{n+1} l_n \dots l_1 w r_1 \dots r_n r_{n+1} \in HI^{n+1}_{m,k}(w)$, where for each $1 \le h \le n+1$, $r_h \in R_P$, $l_h \in L_P$, one of $r_h$ and $l_h$ is $\lambda$ and the other is not $\lambda$. 
From the definition of $C_{m,k}$ and $C'_{m,k}$, either $( {\overline{r_{n+1}}}^R \cdot \alpha, \overline{\alpha}^R)$ is in $C_{m,k}(l_n \dots l_1 w r_1 \dots r_n)$ or $(\overline{\alpha}^R, \alpha \cdot {\overline{l_{n+1}}}^R)$ is in $C'_{m,k}(l_n \dots l_1 w r_1 \dots r_n)$ with $|\alpha| = k$. 
From the induction hypothesis and Claim \ref{cla-sfcf}, there exists a derivation : 
\[S \Rightarrow E_i a_p \Rightarrow^n r_1 \dots r_n E_j l_n \dots l_1 a_p \]
with $l_n \dots l_1 wr_1 \dots r_n \equiv_{E_{m,k}} w_j$. Therefore, it holds that either $( {\overline{r_{n+1}}}^R \cdot \alpha, \overline{\alpha}^R) \in C_{m,k}(w_j)$ or $(\overline{\alpha}^R, \alpha \cdot {\overline{l_{n+1}}}^R) \in C'_{m,k}(w_j)$, from which there exists the derivation either \\[-8mm]
\begin{align*}
S &\Rightarrow E_i a_p \Rightarrow^n r_1 \dots r_n  E_j l_n \dots l_1 a_p \Rightarrow r_1 \dots r_n r_{n+1} E_h l_n \dots l_1 a_p \\
   &\Rightarrow r_1 \dots r_n r_{n+1} \$ l_n \dots l_1 a_p\\[-8mm]
\end{align*}
or  \\[-10mm]
\begin{align*}
S &\Rightarrow E_i a_p \Rightarrow^n r_1 \dots r_n  E_j l_n \dots l_1 a_p \Rightarrow r_1 \dots r_n E_h l_{n+1} l_n \dots l_1 a_p \\
   &\Rightarrow r_1 \dots r_n \$ l_{n+1} l_n \dots l_1 a_p\\[-7mm]
\end{align*}
for some $E_h \in N$. \\[1mm]
(Only If Part) Consider the case where there exists a derivation $S \Rightarrow E_i a_p \Rightarrow^n r_1 \dots r_n  E_j l_n \dots l_1 a_p \Rightarrow r_1 \dots r_n r_{n+1} E_h  l_n \dots l_1 a_p \Rightarrow r_1 \dots r_n  r_{n+1} \$ l_n \dots l_1 a_p$ for some $E_h \in N$. Then, it holds that for any $w \in A_p$, $l_n \dots l_1 wr_1 \dots r_n \equiv_{E_{m,k}} w_j$ from Claim \ref{cla-sfcf}. Moreover, from 
the way of construction of $P$, there exists $(\overline{r_{n+1}}^R \cdot \alpha, \overline{\alpha}^R) \in C_{m,k}(w_j) = C_{m,k}(l_n \dots l_1 wr_1 \dots r_n)$. 
Hence, $l_n \dots l_1 wr_1 \dots r_n r_{n+1}$ is in $HI_{m,k}(l_n \dots l_1 wr_1 \dots r_n)$. From the induction hypothesis, 
$l_n \dots l_1 wr_1 \dots r_n \in HI^n_{m,k}(w)$ so that $l_n \dots l_1 wr_1 \dots r_n r_{n+1} \in HI^{n+1}_{m,k}(w)$.

For the other case, there exists a derivation $S \Rightarrow E_i a_p \Rightarrow^n r_1 \dots r_n  E_j l_n \dots l_1 a_p$ $\Rightarrow r_1 \dots r_n E_h l_{n+1}  l_n \dots l_1 a_p \Rightarrow r_1 \dots r_n \$ l_{n+1} l_n \dots l_1 a_p$ for some $E_h \in N$. Then we can show in a similar way that for any $w \in A_p$, $l_{n+1} l_n \dots l_1 wr_1 \dots r_n \in HI^{n+1}_{m,k}(w)$.
\end{proof}

In order to prove the next result, we need a language operation called \textit{circular permutation} $cp$ which maps every word in the set of all its circular permutations and every language in the set of all circular permutations of its words. The proof is due to an idea similar to the one in \cite{bounded09}.

\begin{thm}
Let $\mathcal{L}$ be a class of languages which includes all linear languages and let $m, k \ge 1$. If $\mathcal{L}$ is closed under circular permutation, left derivative and substitution, then $\mathcal{L}$ is also closed under iterated  $m$-bounded $k$-hairpin incompletion.
\label{ibhi}
\end{thm}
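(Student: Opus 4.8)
The plan is to realize $HI^*_{m,k}(L)$, for an arbitrary $L\in\mathcal L$, by feeding the \emph{linear} language generated by the grammar $G_L$ into the three permitted operations. Note first that $L(G_L)$ is linear and hence lies in $\mathcal L$; by Claim~\ref{cla-sfcf2} its words are exactly the encodings $r_1\cdots r_n\,\$\,l_n\cdots l_1\,a_i$ of the valid two‑sided extension patterns, tagged by the class marker $a_i$. The virtue of this encoding is that the right extensions $r_1\cdots r_n$ and the reversed left extensions $l_n\cdots l_1$ sit on opposite sides of the separator $\$$, with the marker adjacent to the left block, so that a single rearrangement produces the order $l_n\cdots l_1\,a_i\,r_1\cdots r_n$ in which the original word must be inserted between the two extension blocks.

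The first concrete step is this rearrangement. Applying circular permutation to $L(G_L)$ and then taking the left derivative with respect to the separator $\$$ selects, from each encoded word, the unique rotation beginning with $\$$ and strips that symbol; since $\$$ occurs exactly once per word, this yields
\[
 M=\{\,l_n\cdots l_1\,a_i\,r_1\cdots r_n \mid r_1\cdots r_n\,\$\,l_n\cdots l_1\,a_i\in L(G_L)\,\}\in\mathcal L .
\]
The second step is a substitution $\sigma$ that fixes every letter of $V$ and sends each marker $a_i$ to a sublanguage $B_i\subseteq L$ collecting the words of $L$ compatible with class $i$; by Claim~\ref{cla-sfcf2} the image $\sigma(M)$ then collapses to $HI^*_{m,k}$ applied to the long words of $L$. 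Finally I would take the union with $L$ itself, a finite union (available since $\mathcal L$ contains the finite languages and is closed under substitution, so $L'\cup L''=\sigma'(\{c,d\})$ with $\sigma'(c)=L',\sigma'(d)=L''$). This reinserts the short words, on which the operation is the identity, and is harmless for the long ones since they already occur at $n=0$, giving $HI^*_{m,k}(L)=\sigma(M)\cup L$.

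The argument turns on the third ingredient: extracting each $B_i$ from $L$ with only the admissible operations. Here I would \emph{not} isolate the exact class $A_i=L\cap[w_i]$, because pinning a class down exactly means forbidding the infixes outside $C_{m,k}(w_i)$ and $C'_{m,k}(w_i)$ — a negative (``avoids‑factor'') condition that is not expressible from circular permutation, left derivative and substitution. Instead I take $B_i$ to be the words $w\in L$ of length at least $2k$ whose length‑$(m{+}k{-}1)$ prefix and suffix agree with those of $w_i$ and whose infix data \emph{contain} $C_{m,k}(w_i)$ and $C'_{m,k}(w_i)$. Each is a \emph{positive} requirement: an exact prefix is obtained by a left derivative followed by re‑prepending the word via substitution; an exact suffix and each ``contains the infix $u$'' condition are obtained by appending a fresh marker, circularly permuting, exposing $u$ (resp.\ the suffix) at the front by a left derivative, and permuting back. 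The same device gives $L_1=L\cap V^{\ge 2k}=\bigcup_{p\in V^{2k}}p(p\backslash L)$. Composing finitely many such restrictions and a finite union over the finitely many classes keeps us inside $\mathcal L$.

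The main obstacle is proving that this \emph{superset} matching is both sound and complete, i.e.\ that $\sigma$ built from the $B_i$ reproduces $HI^*_{m,k}(L)$ exactly. Completeness is immediate because $A_i\subseteq B_i$. Soundness needs a monotonicity lemma: if $w'\in B_i$, then every extension step legal for the representative $w_i$ is legal for $w'$, since $C_{m,k}(w_i)\subseteq C_{m,k}(w')$ supplies the triggering infix $\gamma\alpha$ while the identical bounded suffix guarantees the triggering suffix and the appended block $\overline{\gamma}^R$ coincide. The delicate point I expect to be the real work is that this inclusion of infix data must be \emph{preserved} after each lockstep extension: the $z$‑components of $C_{m,k}$ depend on what follows an occurrence, so monotonicity of the raw infix set can fail near the growing ends, and one must use the exact agreement of the length‑$(m{+}k{-}1)$ prefix and suffix to control precisely those boundary occurrences. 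Granting the lemma, any word produced by a coarser class $i$ applied to $w'$ is reachable from $w'$ itself and so lies in $HI^*_{m,k}(L)$, whence no spurious words appear and the construction is exact.
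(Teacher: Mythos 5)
Your skeleton is exactly the paper's: generate the linear language $L(G_L)$ (hence in $\mathcal{L}$), rotate the separator to the front with $cp$, strip it with a left derivative, and substitute a sublanguage of $L$ for each class marker $a_i$; the paper states the result precisely as $HI^*_{m,k}(L)=f(\$\backslash cp(L(G_L)))$. Where you diverge is in what you substitute for $a_i$. The paper substitutes the exact equivalence class $A_i$, for which Claim~\ref{cla-sfcf2} delivers soundness and completeness in one stroke (a word $r_1\cdots r_n\$\,l_n\cdots l_1a_i$ is generated iff $l_n\cdots l_1wr_1\cdots r_n\in HI^n_{m,k}(L)$ for \emph{every} $w\in A_i$), so no auxiliary lemma is needed. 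You instead substitute a superset $B_i\supseteq A_i$ cut out by positive conditions, and the correctness of the whole construction then rests on your ``monotonicity lemma'' --- that containment of the $C_{m,k}$/$C'_{m,k}$ data plus agreement of the bounded prefix and suffix is preserved by every extension step and forces every step legal for $w_i$ to be legal for $w'$. You explicitly do not prove this lemma (``granting the lemma''), and it is genuinely delicate: the update rules in the proofs of Claims~\ref{cla-ri} and~\ref{cla-rili} carry length side conditions such as $|w_1r|\ge |z|+i+k-1$, so a pair can enter $C'_{m,k}(w_ia)$ without entering $C'_{m,k}(w'a)$ when $w'$ is shorter than $w_i$, and your $B_i$ imposes no length constraint beyond $|w'|\ge 2k$. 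Until this is nailed down, $\sigma(M)$ could contain words outside $HI^*_{m,k}(L)$, so the proposal has a real gap at its central step.

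That said, the worry motivating your detour is a fair one to raise against the paper itself: its proof silently uses $A_i\in\mathcal{L}$, and the hypotheses of this theorem (unlike those of the one-sided theorem) do not list intersection with regular languages, which is the natural way to obtain $A_i$ as the intersection of $L$ with the regular class $[w_i]$. But your patch trades that one unverified assumption for two others: the monotonicity lemma above, and the claim that each $B_i$ is reachable from $L$ by finitely many applications of $cp$, left derivative and substitution, which you only sketch (the membership condition $w\in V^*xyV^*z$ needs more than a single ``contains the infix $u$'' rotation trick, and each marker-based step must be checked not to conflate distinct words). The cleaner repair is to follow the paper and substitute the exact $A_i$, supplying the missing justification that $A_i\in\mathcal{L}$ --- immediate for families such as the context-free languages of the corollary, which are closed under intersection with regular sets.
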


\begin{proof}
Recall the construction of the linear grammar $G_L$. Let $L$ be in $\mathcal{L}$ and $f$ be a substitution over $T$ defined by $f(a_i) = A_i$ for $\{ a_i \, | \, 0 \le i \le u \}$ and $f(a)= \{a \}$ otherwise. From Claim \ref{cla-sfcf2}, it holds that 
\begin{align*}
L_G = \{ r_1 \dots r_n \$ l_n \dots l_1 a_i \, | \, &a_i \in T, \, 1\le j \le n, \, r_j \in R_P, \, l_j \in L_P,  \\
&\text{for any } w \in A_i, \, l_n \dots l_1 w r_1 \dots r_n \in HI^*_{m,k}(L) \},
\end{align*}
where $L_G = L(G_L)$. Hence, it is easily seen that $HI^*_{m,k}(L) = f( \$ \backslash  cp(L_G))$.
\end{proof}
Since the family of context-free languages meets all of preconditions in Theorem \ref{ibhi}, the following corollary holds.
\begin{cor}
The family of context-free languages is closed under iterated  $m$-bounded $k$-hairpin incompletion for any $m,k \ge 1$.
\end{cor}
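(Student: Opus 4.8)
The plan is to deduce the corollary directly from Theorem \ref{ibhi} by checking that the family $\mathrm{CF}$ of context-free languages satisfies each of its hypotheses. There are exactly four conditions to verify: (i) $\mathrm{CF}$ contains all linear languages; (ii) $\mathrm{CF}$ is closed under circular permutation $cp$; (iii) $\mathrm{CF}$ is closed under left derivative; and (iv) $\mathrm{CF}$ is closed under substitution. Once all four are confirmed, the conclusion of Theorem \ref{ibhi} applies verbatim to $\mathrm{CF}$, giving closure under iterated $m$-bounded $k$-hairpin incompletion for all $m,k\ge 1$.

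First I would dispose of the three routine items. Condition (i) is immediate, since a linear grammar is a special context-free grammar, so every linear language is context-free. For (iii), given a context-free language $L$ and a fixed word $w$, the left derivative $w\backslash L = \{x \mid wx \in L\}$ is obtained from $L$ by the standard device that reads $w$ first and then simulates an acceptor for $L$, so $w\backslash L$ is again context-free. For (iv) I would invoke the classical closure of $\mathrm{CF}$ under substitution by context-free languages. It is worth recording why the particular substitution $f$ in the proof of Theorem \ref{ibhi} stays inside this framework: $f$ sends each $a_i$ to the class $A_i \in L/\equiv_{E_{m,k}}$, and each such $A_i$ is the intersection of $L$ with the $\equiv_{E_{m,k}}$-class of $w_i$ in $V^*$, which is a regular set (the value $E_{m,k}(\cdot)$ is determined by infixes, prefixes and suffixes of bounded length). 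Hence each $A_i$ is context-free, $f$ is a substitution by context-free languages, and $f(\$ \backslash cp(L_G))$ is context-free.

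The one hypothesis that genuinely requires a nontrivial result is (ii), closure of $\mathrm{CF}$ under the circular permutation $cp$. Here I would appeal to the known theorem that the cyclic closure of a context-free language is again context-free (due to Maslov; see the standard literature). The underlying construction is to take a context-free grammar for the target language and build a new device recognizing $cp(L_G)$ by guessing a factorization $u = u_2 u_1$ of a candidate word, simulating a derivation of the rearranged word $u_1 u_2$ with the two parts tracked simultaneously, and verifying consistency of the guess; the resulting mechanism remains within $\mathrm{CF}$. Since the grammar $G_L$ constructed just before Theorem \ref{ibhi} is linear and hence context-free, $cp(L_G)$ is context-free.

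Combining the four verifications shows that $\mathrm{CF}$ meets every precondition of Theorem \ref{ibhi}, and therefore $\mathrm{CF}$ is closed under iterated $m$-bounded $k$-hairpin incompletion for all $m,k \ge 1$, which is precisely the assertion of the corollary. I expect the circular-permutation closure to be the only delicate point, as it is the only one of the four that is not an elementary AFL-type property; the remaining three are direct invocations of standard closure results for the context-free languages.
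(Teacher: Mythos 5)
Your proposal is correct and takes exactly the paper's route: the paper derives the corollary in one line by asserting that the family of context-free languages meets all preconditions of Theorem \ref{ibhi}, and your argument simply makes those verifications explicit. Your added details are sound and welcome --- in particular the appeal to the Maslov/Oshiba theorem for closure under circular permutation, and the observation that each class $A_i \in L/\equiv_{E_{m,k}}$ is context-free because the $\equiv_{E_{m,k}}$-classes are regular (this also follows from the relation being a finite-index right-invariant equivalence, via the Myhill--Nerode construction), so the substitution $f$ indeed substitutes languages from the family.
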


\section{Concluding Remarks}

In many works on DNA-based computing and the related areas, DNA hairpin structures have numerous applications to develop novel computing mechanisms in molecular computing. Among others, these molecules of  hairpin formation called Whiplash PCR  have been successfully employed as the basic feature of new computational models to solve an instance of the 3-SAT problem  (\cite{Science}),  to execute (and simulate) state transition systems (\cite{Sakamoto99BS}), to explore the feasibility of  parallel computing for solving DHPP (\cite{kK06a}), and so forth.
On the other hand,  different types of hairpin and hairpin-free languages are defined in \cite{hair}  and more recently in \cite{dlt05}, where they are studied from a language theoretical point of view.

We have proposed a new variant of hairpin completion called hairpin incompletion, and investigated its closure properties of the language 
families. The hairpin incompletion is in fact a bounded variant of 
the hairpin lengthening in \cite{hl10}  where not only closure properties 
of language families but also the algorithmic aspects of the hairpin lengthening operations are investigated.  The hairpin incompletion 
is also an extended version of the bounded hairpin completion recently studied in \cite{bounded09} that has been more recently followed up by slightly modified operations in \cite{kopecki} where two open problems from \cite{bounded09} have been solved.   

We have shown that every AFL is closed under the iterated one-sided hairpin incompletion, and therefore, the family of regular languages is closed under the operation. Further, it has been shown that the family of context-free languages is closed under the iterated hairpin incompletion. These complement some of the corresponding results for (unbounded) hairpin 
lengthening operations in \cite{hl10}.  Moreover, since the hairpin incompletion nicely models a bio-molecular technique (Whiplash 
PCR),  the obtained results in this paper may provide new insight into the computational analysis of the experimental technique. 

It remains as an interesting open problem if the family of regular languages 
is closed under iterated hairpin incompletion.

\end{document}